\definecolor{webgreen}{rgb}{0,.5,0} 
\definecolor{webbrown}{rgb}{.6,0,0}
{\makeatletter
 \gdef\xxxmark{%
   \expandafter\ifx\csname @mpargs\endcsname\relax 
     \expandafter\ifx\csname @captype\endcsname\relax 
       \marginpar{xxx}
     \else
       xxx 
     \fi
   \else
     xxx 
   \fi}
 \gdef\xxx{\@ifnextchar[\xxx@lab\xxx@nolab}
 \long\gdef\xxx@lab[#1]#2{{\bf [\xxxmark #2 ---{\sc #1}]}}
 \long\gdef\xxx@nolab#1{{\bf [\xxxmark #1]}}
 \long\gdef\xxx@lab[#1]#2{}\long\gdef\xxx@nolab#1{}
}
\newcommand{\OPT}{\ensuremath{\mathsf{OPT}}}
\newcommand{\len}{\ensuremath{\mathsf{length}}}
\newcommand{\tw}{\ensuremath{\mathsf{tw}}}
\newcommand{\dist}{\ensuremath{\mathsf{dist}}}
\newcommand{\cost}{\ensuremath{\mathsf{cost}}}
\newcommand{\sfor}{\ensuremath{\mathsf{SteinerForest}}}
\newcommand{\D}{\ensuremath{\mathcal{D}}}
\newcommand{\dsat}{\ensuremath{\D^\mathsf{sat}}}
\newcommand{\dunsat}{\ensuremath{\D^\mathsf{unsat}}}
\newcommand{\C}{\ensuremath{\mathcal{C}}}
\newcommand{\eS}{\ensuremath{\mathcal{S}}}
\newcommand{\B}{\ensuremath{\mathcal{B}}}
\newcommand{\eps}{\ensuremath{\epsilon}}
\newcommand{\R}{\ensuremath{\mathbb{R}}}
\newcommand{\I}{\mathcal{I}}
\newcommand{\DP}{\ensuremath{\mathrm{DP}\xspace}}
\newcommand{\pcst}{\textsf{PCST}\xspace}
\newcommand{\pcs}{\textsf{PCS}\xspace}
\newcommand{\pctsp}{\textsf{PCTSP}\xspace}
\newcommand{\pcsf}{\textsf{PCSF}\xspace}
\newcommand{\spcsf}{\textsf{SPCSF}\xspace}
\newcommand{\spctsp}{\textsf{SPCTSP}\xspace}
\newcommand{\spcs}{\textsf{SPCS}\xspace}
\newcommand{\psubmodforest}{\textsf{Submodular Prize-Collecting Steiner Forest}\xspace}
\newcommand{\prob}[1]{\textsf{#1}}
\newcommand{\algo}[1]{\textsc{#1}}
\newtheorem{theorem}{Theorem}
\newtheorem{lemma}[theorem]{Lemma}
\newtheorem{corollary}[theorem]{Corollary}
\theoremstyle{definition}
\begin{document}

\author{MohammadHossein Bateni\thanks{Department of Computer Science, Princeton University, Princeton, NJ 08540; Email: \textsf{mbateni@cs.princeton.edu}.
The author is also with Center for Computational Intractability, Princeton, NJ 08540.
He was supported by
a Gordon Wu fellowship as well as
NSF ITR grants
                      CCF-0205594, CCF-0426582 and NSF CCF 0832797,
                      NSF CAREER award CCF-0237113,
                      MSPA-MCS award 0528414,
                      NSF expeditions award 0832797.
}
 \and MohammadTaghi Hajiaghayi\thanks{AT\&T Labs--Research, Florham
   Park, NJ 07932; Email: \textsf{hajiagha@research.att.com}.}
\and
 D\'{a}niel Marx\thanks{The Blavatnik
School of Computer Science,
Tel Aviv University,
Tel Aviv, Israel; \texttt{dmarx@cs.bme.hu.}
He is supported by ERC
   Advanced Grant DMMCA.}
}
\title{Prize-collecting Network Design on Planar Graphs}
\date{}
\maketitle
\begin{abstract}
In this paper, we reduce {\sf Prize-Collecting Steiner TSP
(\pctsp)}, {\sf Prize-Collecting Stroll (\pcs)}, {\sf
Prize-Collecting Steiner Tree (\pcst)}, {\sf Prize-Collecting
Steiner Forest (\pcsf)} and more generally {\sf Submodular
Prize-Collecting Steiner Forest (\spcsf)}  on planar graphs (and
more generally bounded-genus graphs) to the same problems on graphs
of bounded treewidth. More precisely, we show any $\alpha$-%
approximation algorithm for these problems on graphs of bounded
treewidth gives an $(\alpha+\epsilon)$-approximation algorithm for
these problems on planar graphs (and more generally bounded-genus
graphs), for any constant $\epsilon> 0$. Since \pcs, \pctsp, and
\pcst can be solved exactly on graphs of bounded treewidth using
dynamic programming, we obtain PTASs for these problems on planar
graphs and bounded-genus graphs. In contrast, we show \pcsf is 
APX-hard to approximate on series-parallel graphs, which are planar
graphs of treewidth at most 2. This result is interesting on its
own because it gives the first provable hardness separation between
prize-collecting and non-prize-collecting (regular) versions of the
problems: regular {\sf Steiner Forest} is known to
be polynomially solvable on series-parallel graphs and admits a PTAS on graphs of bounded
treewidth. An analogous hardness result can be shown for
Euclidian \pcsf. This ends the common belief that prize-collecting
variants should not add any new hardness to the problems.
\end{abstract}

\newpage

\section{Introduction}
Prize-collecting problems involve situations where there are various
demands that desire to be ``served'' by some structure and we must
find the structure of lowest cost to accomplish this.  However, if
some of the demands are too expensive to serve, then we can refuse
to serve them and instead pay a penalty. In particular,
prize-collecting Steiner problems are well-known
 network design problems with several applications in
expanding telecommunications networks (see for
example~\cite{JohnsonMP00-pcst,SCRS}), cost sharing, and Lagrangian relaxation
techniques (see e.g. \cite{JV01,CRW01}). A general form of these
problems is the {\sf Prize-Collecting Steiner Forest} (\pcsf)
problem\footnote{In the literature, this problem is also called {\sf
Prize-Collecting Generalized Steiner Tree}.}: given a network
(graph) $G=(V,E)$, a set of source-sink pairs\footnote{Source-sink pairs are sometimes called demands.}
$\D=\{\{s_1,t_1\},\{s_2,t_2\}, \dots,\{s_{k},t_{k}\}\}$, a
non-negative cost function $c:E\rightarrow \R_+$, and a
non-negative penalty function $\pi:2^\D\rightarrow \R_+$, our goal
is a minimum-cost way of installing (buying) a set of links (edges)
and paying the penalty for those pairs which are not connected via
installed links. We also consider the problem with a general penalty
function called {\sf Submodular Prize-Collecting Steiner Forest
(\spcsf)}, in which the penalty function $\pi$ is a monotone
non-negative submodular function\footnote{A function
$f:2^{S}\mapsto\R$ is called \emph{submodular} if and only if
$\forall A, B\subseteq S:f(A)+f(B) \geq f(A\cup B)+f(A\cap B)$. An
equivalent characterization is that the marginal profit of each item
should be non-increasing, i.e., $f(A\cup\{a\})-f(A) \leq
f(B\cup\{a\})-f(B)$ if $B\subseteq A\subseteq S$ and $a\in
S\setminus B$. A function $f:2^S\mapsto\R$ is \emph{monotone} if and
only if $f(A) \leq f(B)$ for $A\subseteq B\subseteq S$. Since the
number of sets is exponential, we assume a value oracle access to
the submodular function; i.e., for a given set $T$, an algorithm can
query an oracle to find its value $f(T)$.} of all unsatisfied pairs.
In \pcsf when all penalties are $\infty$, the problem is the classic
APX-hard {\sf Steiner Forest} problem, for which the best known
approximation ratio is $2-\frac{2}{n}$ ($n$ is the number of nodes
of the graph) due to Agrawal, Klein, and Ravi~\cite{AKR95} (see also
\cite{GW95} for a more general result and a simpler analysis). The
case of {\sf Prize-Collecting Steiner Forest} problem in which all
sinks are identical is the classic {\sf (rooted) Prize-Collecting
Steiner Tree (\pcst)} problem. In the unrooted version of this
problem, there is no specific sink (root) and the goal is to find a
tree connecting some sources and pay the penalty for the rest of
them. We also study two variants of {\sf (unrooted) Prize-Collecting
Steiner Tree}, {\sf Prize-collecting
  TSP (\pctsp)} and {\sf Prize-collecting
  Stroll (\pcs)}, in which the set of edges should form a cycle and a path (in order) instead of a tree.
When in addition all penalties are $\infty$ in these
prize-collecting problems, we have classic APX-hard problems {\sf
Steiner Tree}, {\sf TSP} and {\sf Stroll (Path TSP)} for which the
best approximation factors in order are 1.38~\cite{BGRS10},
$\frac{3}{2}$~\cite{Christofides76-TSP}, and
$\frac{3}{2}$~\cite{Hoogeveen91-TSP}.

In network design, planarity is a natural restriction since in
practical scenarios of physical networking, with cable or fiber
embedded in the ground, crossings are rare or nonexistent. Thus
obtaining algorithms with better approximation factors are highly
desirable in this case.  In many cases, approximation algorithms for
planar graphs is based on reducing the problem to bounded treewidth
instances such that the optimum changes only by a small term. This
idea goes back to the classical work of Baker~\cite{cr:3} and have
been applied successfully several times in various contexts. The
algorithmic and graph-theoretic properties of treewidth are
intensively studied and a well-understood dynamic programming
technique can solve NP-hard problems on bounded treewidth graphs. Our
goal is to understand how far this paradigm can be pushed: what are
the most general problems that can be solved this way. In particular,
we want to understand the applicability of this technique to
prize-collecting variants of standard optimization problems.

{\sf TSP}, {\sf Steiner Tree}, and {\sf
Steiner Forest} all have been considered extensively on planar
graphs. Indeed all these problems remain hard even on planar
graphs~\cite{GJ77}. However obtaining a PTAS for each of these
problems remained a very important open problem for several years.
Grigni, Koutsoupias, and Papadimitriou~\cite{papa} obtained the
first PTAS for {\sf TSP} on unweighted planar graphs in 1995 which later
has been generalized to weighted planar graphs \cite{AGK98} (and
improved to linear time \cite{Klein08}). Obtaining a PTAS for
{\sf Steiner Tree} on planar graphs remained elusive for almost 12 years
until 2007 when Borradaile, Klein and Mathieu~\cite{BKM07}
obtained the first PTAS for {\sf Steiner Tree} on planar graphs using a
revolutionary technique of contraction decomposition  and building
spanners and posed obtaining a PTAS for {\sf Steiner Forest} in planar
graphs as the main open problem. Bateni, Hajiaghayi and
Marx~\cite{BHM10} very recently solved this open problem using a new
primal-dual technique for building spanners and obtaining PTASs by
reducing the problem to bounded treewidth graphs. Note that the {\sf Steiner
Forest} problem already shows signs of the reduction to bounded
treewidth paradigm breaking down: surprisingly,
{\sf Steiner Forest} turns out to be NP-hard even on graphs of treewidth
3. However, \cite{BHM10} gets around this problem by using a PTAS on
bounded treewidth graph instead of an exact algorithm.

Obtaining PTASs for prize-collecting versions of these problems
remained a main open problem (see~\cite{BHM10,BH10}). It is not
obvious how to generalize the reduction to bounded treewidth for these
problems, and in particular new techniques are needed for handling
penalties before building a spanner. In this paper, we resolve these
open problems for all three of \pcst, \pctsp, \pcsf, and even
more generally, for \spcsf, by reducing these problems on planar
graphs to the same problems on graphs
of bounded treewidth.  More precisely we show any
$\alpha$-approximation algorithm for these problems on graphs of
bounded treewidth gives a $(\alpha+\epsilon)$-approximation algorithm
for these problems on planar graphs and bounded-genus graphs, for any
constant $\epsilon> 0$. Therefore, we demonstrate that the technique
of reduction to bounded treewidth works even for very general version
of problems involving prizes.  Since \pcst and \pctsp can be solved
exactly on graphs of bounded treewidth using standard dynamic
programming techniques (as we discuss later in the paper), we
immediately obtain PTASs for \pcst and \pctsp on planar graphs (the
same holds for \pcs as well).  In contrast, we show that \pcsf is
APX-hard already on series-parallel graphs, which are planar graphs
of treewidth at most 2, ruling out any hope for a PTAS for planar
\pcsf. This result is interesting on its own, since it gives the first
provable hardness separation between prize-collecting and
non-prize-collecting (regular) versions of the problems: regular {\sf
Steiner Forest} is known to be polynomially solvable on
series-parallel graphs and admits a PTAS on graphs of bounded
treewidth. since {\sf Steiner Forest} on series-parallel graphs is
polynomially solvable and more generally on graphs of bounded
treewidth admits a PTAS~\cite{BHM10}. An analogous hardness result can
be given for Euclidean \pcsf when the vertices of the input graph are
points in the Euclidean plane and the lengths are Euclidean distances
(which answers an open problem in~\cite{BH10}). This ends the common
belief that prize-collecting variants should not add any new hardness
to the problems.

\textbf{Related work.}
\pcst and \pctsp are two of the classic optimization problems with a
large impact, both in theory and practice.  At AT\&T, \pcst code has
been used in large-scale studies in access network design, both as
described in Johnson, Minkoff and Phillips~\cite{JohnsonMP00-pcst},
and another unpublished applied work by Archer at
al. 
 The impact of \pctsp within
approximation algorithms is also far-reaching. In particular \pctsp
 is a Lagrangian relaxation of the $k$-MST problem, which asks for
the minimum-cost tree spanning at least $k$ nodes, and has used in a
sequence of papers
(\cite{Garg96-3approx-kMST,AR98-kMST,ChudakRW04-kMST-lr-jrnl,AroraK06-kMST-jrnl})
culminating in a 2-approximation algorithm for $k$-MST by Garg
\cite{Garg05-2approx-kmst}. \pctsp has also been used to improve the
approximation ratio and running time of algorithms for the {\sf
  Minimum Latency} problem (\cite{ArcherLW08-mlp,
ChaudhuriGRT03-mlp-focs}). The first approximation algorithms for
the \pcst and \pctsp problems were given by Bienstock et
al.~\cite{BienstockGSW93-pctsp}, although the \pctsp had been
introduced earlier by Balas~\cite{Balas89-pctsp}. Bienstock et al.\ 
achieved a factor of 3 for \pcst and 2.5 for \pctsp by rounding the
optimal solution to a linear programming (LP) relaxation. Later,
Goemans and Williamson~\cite{GW95-moats} constructed primal-dual
algorithms using the same LP relaxation to obtain a 2-approximation
for both problems, building on work of Agrawal, Klein and
Ravi~\cite{AKR95}.  Chaudhuri et al.\ modified the
Goemans-Williamson algorithm to achieve a 2-approximation algorithm
for \pcs~\cite{ChaudhuriGRT03-mlp-focs}. Improving over the
approximation factor 2 of  Goemans and Williamson for \pcst and
\pctsp was a long-standing open problem for 17 years until recently
that Archer, Bateni, Hajiaghayi, and Karloff~\cite{ABHK09} obtain
constant factors strictly better than 2 ($\approx 1.99$) for both
problems, and for \pcs as well. More recently Goemans combined some
ideas of~\cite{ABHK09} with others from~\cite{Goemans98-PCTSP-talk}
to improve the ratio for \pctsp below
1.915~\cite{Goemans09-PCTSP-impr}.

 The general form of the {\sf Prize-Collecting Steiner Forest} problem first
has been formulated  by Hajiaghayi and Jain~\cite{HJ06}. They showed
how by using a primal-dual method to a novel integer programming
formulation of the problem with doubly-exponential variables, we can
obtain a 3-approximation algorithm for the problem. In addition,
they show that the factor 3 in the analysis of their algorithm is
tight. However they show how a direct randomized LP-rounding
algorithm with approximation factor 2.54 can be obtained for this
problem. Their approach has been generalized by Sharma, Swamy, and
Williamson~\cite{SSW07} for network design problems where violated
arbitrary 0-1 connectivity constraints are allowed in exchange for a
more general penalty function.  Hajiaghayi and Nasri~\cite{HN08}
show factor 3 for {\sf Prize-Collecting Steiner Forest} can  also be
obtained via an iterative rounding approach, first introduced by
Jain~\cite{Jain01}, and indeed factor 3 is the best one can hope via
this approach. The work of Hajiaghayi and Jain has also motivated a
game-theoretic version of the problem considered by Gupta et
al.~\cite{GKL+07}. Very recently, Hajiaghayi et al.~\cite{HKKN10}
obtain a 2.54 approximation algorithm for the more general problem
\spcsf. Aforementioned, our reduction from planar graphs to graphs of
bounded treewidth works even for \spcsf.  It is worth mentioning
optimizing a submodular function, a discrete analog of a convex
function, which also demonstrates economy of scale is a central and
very general problem in combinatorial optimization and has been
subject of a thorough study in the literature in many important
settings including cuts in graphs~\cite{IFF00,GW95,Q95}, plant
location problems~\cite{CFN76, CFN77}, rank function of
matroids~\cite{E70}, set covering problems~\cite{F98}, and certain
restricted satisfiability problems~\cite{H01, FG95}.

\paragraph{Remark}
Subsequent to, and independent of, our work, Chekuri et al.~\cite{CEK10:prize}
obtain a subset of our results including a reduction for prize-collecting 
Steiner tree and prize-collecting Steiner forest from planar graphs
to graphs of bounded treewidth (i.e., a weaker version of our Theorem~\ref{thm:reduce-submod},
albeit with different techniques) which leads to a PTAS for planar prize-collecting
Steiner tree.  The hardness results though are unique to our work.


\section{Contributions}
We first formally define the most general problem studied in this paper.
%
%
An instance of \prob{Submodular Prize-Collecting Steiner Forest} \spcsf
is described by a triple $(G, \D, \pi)$ where
$G$ is a undirected weighted graph,
$\D$ is a set of $d_i = \{s_i, t_i\}$ demand pairs,
and $\pi : 2^\D \mapsto \R^+$ is a monotone nonnegative submodular penalty function.
A demand $d = \{s,t\}$ is \emph{satisfied} by a subgraph $F$
if and only if $s, t$ are connected in $F$.
If a forest $F$ satisfies a subset $\dsat$ of the demands,
its cost is defined as $\cost(F) := \len(F) + \pi(\dunsat)$,
where $\len(F)$ is a shorthand for the total length of all edges in $F$,
and $\dunsat := \D\setminus\dsat$ denotes the subset of unsatisfied demands.

We similarly define \spctsp, \spcs and \prob{SPCST}
that are submodular prize-collecting variants of
\prob{Travelling Salesman Problem}, \prob{Stroll} and \prob{Steiner Tree}, respectively.
The instance is represented by $(G, \D, \pi)$
where all the demands $d=\{s,t\}\in\D$ share
a common root vertex $r \in V(G)$.\footnote{The problems
may be more naturally defined with single-vertex demands rather  demand pairs; 
having such a formulation, we can guess one vertex of the solution,
designate it as the root and obtain the rooted formulation as defined in this paper.}
A solution $F$ is a TSP (stroll or Steiner tree, respectively) for a subset of
demands, say $\dsat \subseteq\D$.
The cost is then $\cost(F) := \len(F) + \pi(\dunsat)$, where
$\dunsat := \D\setminus\dsat$.

We first show that \psubmodforest{} on planar graphs (or more generally, bounded-genus graphs)
is almost equivalent to that on graphs of bounded-treewidth;
refer to Appendix~\ref{app:btw-defs} for definitions regarding 
the treewidth and bounded-treewidth graphs as well as bounded-genus graphs.
In particular, were we able to give 
a PTAS for \spcsf
on graphs of bounded treewidth, we would readily have a PTAS for \spcsf on bounded-genus graphs.
In the rest of the paper, we focus on planar graphs.
All the algorithms and analyses can be extended with minor modifications
to work for bounded-genus graphs.

\begin{theorem}\label{thm:reduce-submod}
For any given constant $\eps > 0$,
 an $\alpha$-approximation algorithm for \spcsf on graphs of bounded treewidth
gives a $(\alpha+\eps)$-approximation algorithm for \spcsf on planar graphs.
\end{theorem}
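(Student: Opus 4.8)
The plan is to use the ``light spanner plus contraction decomposition'' paradigm, adapted to prize collecting. Fix an optimal forest $F^*$ of the \spcsf instance on $G$, let $\dsat$ be the set of demands it satisfies and $\dunsat := \D\setminus\dsat$ the rest, so $\OPT = \len(F^*)+\pi(\dunsat)$; put $\delta := \eps/(2\alpha)$. The first (and main) step is to construct a \emph{prize-collecting spanner}: a subgraph $H\subseteq G$ with $\len(H)\le g(\delta)\cdot\OPT$ for some function $g$ depending only on $\delta$, that nonetheless contains a forest $F'$ satisfying $\dsat$ with $\len(F')\le(1+\delta)\len(F^*)$ — so that $\cost_G(F')\le(1+\delta)\OPT$, using the \emph{same} unsatisfied set $\dunsat$. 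Next, apply the planar (and bounded-genus) contraction-decomposition theorem~\cite{BKM07,BHM10} to $H$ with $k := \lceil 2\,g(\delta)/\eps\rceil$: it partitions $E(H)$ into classes $E_1,\dots,E_k$ so that contracting any single $E_i$ turns $H$ into a graph of treewidth $O(k)=O_{\alpha,\eps}(1)$. Let $E_{i^*}$ be the lightest class; by averaging $\len(E_{i^*})\le\len(H)/k\le(\eps/2)\OPT$. Form $H' := H/E_{i^*}$, run the given $\alpha$-approximation on the bounded-treewidth instance $(H',\D,\pi)$ — the demand set and the penalty oracle are passed through \emph{unchanged} — obtaining a forest $\hat F$ in $H'$, and lift it back to $H$ by re-inserting all edges of $E_{i^*}$ at zero marginal cost and deleting any resulting cycles.

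For correctness, the key fact is that contracting $E_{i^*}$ never hurts an \spcsf solution: the image of any forest in $H$ satisfies at least the same demand pairs and has no greater total length, hence $\OPT(H',\D,\pi)\le\OPT(H,\D,\pi)\le(1+\delta)\OPT$, the last step being the spanner property applied to $F'$. Thus $\cost_{H'}(\hat F)\le\alpha(1+\delta)\OPT$. Re-inserting $E_{i^*}$ increases the length by at most $\len(E_{i^*})\le(\eps/2)\OPT$ and changes no connectivity, so the lifted subgraph is a feasible \spcsf solution of $G$ of cost at most $\alpha(1+\delta)\OPT+(\eps/2)\OPT = (\alpha+\eps/2+\eps/2)\OPT = (\alpha+\eps)\OPT$, since $\alpha\delta=\eps/2$. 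Note that $\pi$ is never modified and submodularity plays no role in this accounting, so the reduction works verbatim for \spcsf; the same template — routing the contracted edges inside the relevant Euler tour or stroll — handles \spctsp, \spcs, and \prob{SPCST}.

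The main obstacle is the prize-collecting spanner of the first step. For ordinary \prob{Steiner Forest} this was already the heart of~\cite{BHM10}, requiring a primal-dual construction rather than the simpler spanner of~\cite{BKM07}; the new wrinkle here is that $\len(H)$ must be $O_\delta(\OPT)$ with $\OPT = \len(F^*)+\pi(\dunsat)$, even though connecting \emph{all} of $\D$ inside $G$ can cost arbitrarily more than $\OPT$, and the spanner must serve \emph{every} candidate demand subset (in particular the unknown $\dsat$) simultaneously. The intended approach is a moat-growing / brick-decomposition process in which each demand's dual moats are capped by that demand's marginal penalty, so that demands genuinely too expensive to serve are automatically paid for (``dropped'') and never inflate $\len(H)$, while every demand the optimum finds worthwhile to connect survives and stays connectable in $H$ up to a $(1+\delta)$ factor. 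Submodularity of $\pi$ is needed only through subadditivity, $\pi(A\cup B)\le\pi(A)+\pi(B)$, when charging a bounded family of dropped demands against their penalties; all finer structure of $\pi$ is deferred to the bounded-treewidth solver. I expect the delicate part to be, as in the non-prize-collecting setting, forcing the blow-up $g(\delta)$ to depend only on $\delta$ and not on $|\D|$ — which will call for an initial normalization of the instance (scaling so $\OPT=1$; discarding any demand whose singleton penalty already exceeds, say, $\delta\OPT$; merging nearby terminals) before the spanner is built.
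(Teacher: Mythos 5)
Your high-level architecture (light spanner, contraction decomposition, bounded-treewidth solver) matches the paper's, and you correctly pinpoint the crux: the whole weight of the theorem rests on the ``prize-collecting spanner'' step, which you leave as an acknowledged gap. But that gap is essentially the theorem; the rest is by now standard. The paper does not produce a single spanner in one shot. It splits the work into two lemmas, each with its own primal-dual moat-growing phase. First (Theorem~\ref{thm:reduce}) it identifies a concrete $\dunsat$ to drop, via a \emph{submodular prize-collecting clustering} whose growth rate $\eta$ at each step is itself computed by an auxiliary LP with a submodular-minimization separation oracle, so that the penalty of $\dunsat$ can be charged against $\OPT$; the instance is then \emph{replaced} by $(G,\dsat,\pi')$ with $\pi'(D):=\pi(D\cup\dunsat)$. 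Second (Theorem~\ref{thm:break}) a separate potential-based clustering in the style of~\cite{BHM10} partitions $\dsat$ so that the optimum can be assumed to respect the partition; only then is the spanner built per piece. Your sketch collapses these into one moat-growth and hopes the solver on $(H',\D,\pi)$ just ``pays for'' dropped demands because they are disconnected in $H$, which is not established and is not how the paper argues.

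Two concrete points where your sketch would break if pushed. (i) You assert that submodularity enters only through subadditivity $\pi(A\cup B)\le\pi(A)+\pi(B)$. The paper's bound on the extra penalty,
\[
\pi\bigl((\dsat\setminus\D^\OPT)\cup A\cup B\bigr)-\pi\bigl((\dsat\setminus\D^\OPT)\cup B\bigr)\le\pi(A\cup B)-\pi(B),
\]
uses the diminishing-returns property (marginals shrink as the ground set grows), which is strictly stronger than subadditivity; subadditivity alone gives only the weaker bound $\pi(A)$, and then one cannot charge against $y(A)=y(A\cup B)-y(B)$ the way the argument requires. (ii) Passing $\pi$ unchanged to the bounded-treewidth solver, rather than the shifted $\pi'$, is only equivalent if the spanner provably makes every demand in $\dunsat$ disconnected in $H$; the clustering output $F^2$ of the paper does not have that property by itself, and the paper instead reasons about what the near-optimal solution \emph{can be assumed} not to satisfy. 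Relatedly, the paper explicitly notes that, unlike~\cite{BHM10}, the submodular penalty \emph{prevents} solving the partitioned pieces independently; only the forest (not the cost accounting) decomposes, so the bounded-treewidth instance must still see the global $\pi'$. Your plan to run the solver on the spanner pieces with $\pi$ untouched does not confront this.

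In short: correct skeleton, correct identification of the hard part, but the hard part is neither done nor correctly scoped — the submodularity is used more heavily than you claim, the dropped-demand set must be identified explicitly and encoded in a modified penalty, and the clustering comes in two distinct phases rather than one.
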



The core of the reduction is based on a \emph{prize-collecting clustering} technique
that was first implicitly used in \cite{ABHK09} and later developed in \cite{BHM10}.
In this work, the clustering technique is generalized as follows:
First, we need to extend the ideas to work for prize-collecting variants
of Steiner network problems.  This can indeed make the problem provably harder; see Theorem~\ref{thm:sf-hard}.
The original prize-collecting clustering associates a potential value to each
node and grows the corresponding clusters consuming these potentials.
However, in order to extend it to the prize-collecting setting, we
consider source-sink potentials.  This means that there is some interaction between the
potentials of different nodes.
Secondly, we consider submodular penalty functions that
model even more interaction between the demands.
The extended prize-collecting clustering procedure has two phases.
In the first phase, we have a source-sink moat-growing algorithm,
and in the second phase, we have a single-node potential moat-growing
like \cite{BHM10}.

Section~\ref{sec:reduce} is devoted to the formal proof Theorem~\ref{thm:reduce-submod}.
The algorithm starts with a constant-approximate solution $F^1$, say, 
obtained using Hajiaghayi et al.~\cite{HKKN10} who prove a $3$-approximation
for \spcsf on general graphs.
The forest $F^1$ satisfies a subset of demands,
and we know the total penalty of unsatisfied demands is bounded.
The algorithm then tries to satisfy more demands by constructing
a forest $F^2 \supseteq F^1$ whose length is bounded; 
see \algo{RestrictDemands} in Section~\ref{sec:rest-prize}.
This step heavily uses a \emph{submodular prize-collecting clustering}
algorithm\footnote{The algorithm bears some similarity to the primal-dual moat-growing
algorithms for the Steiner network problems.  One key difference is that we do not have
a primal LP.  We have an LP similar to the dual linear programs used in such algorithms,
and we use a notion of potential as a substitute for the lack of the primal LP.  The potentials,
among other things, play the role of an upper bound for the value of the dual LP.} 
introduced in Section~\ref{sec:submod-cluster}.
At the end of this step, we can assume that the near-optimal
solution does not satisfy the demands which are unsatisfied in $F^2$.
Submodularity poses several difficulties in proving this property:
ideally, we want to say that the cost paid by the optimal solution
to satisfy these demands is significantly more than their penalty
value.  Surprisingly, this is not true.  Nevertheless, we can prove
that the \emph{marginal cost} of the demands satisfied in the near-optimal
solution but not in $F^2$ can be charged to the cost the near-optimal
solution pays in order to satisfy them.
The next step of the reduction is to build a forest $F^3 \supseteq F^2$
of bounded length that may connect several components of $F^2$ together;
see Section~\ref{sec:reduce:2}.
This is done by assigning to each component of $F^2$ a potential 
proportional to its length, and then running a prize-collecting clustering
similar to that of \cite{BHM10}.
This guarantees that the near-optimal solution does not need to connect
different components of $F^3$ to each other.  The implication is that
we can construct a spanner (see \cite{BHM10, BKM07, Klein08})
out of each component of $F^3$ separately
from the others.  In the previous work \cite{BHM10}, we could solve
each of the subinstances independently, however, the penalty interaction
originating from the submodular penalty function in the current work
does not allow us to solve each subinstance completely independently.
Instead, we say that the forest of the near-optimal solution on each
subinstance is independent of the others.
After constructing the spanner graph $F^4$, we invoke a generalization
of the shifting idea of Baker~\cite{cr:3} due to \cite{DHM07,Klein08}.
Paying a cost of at most $\epsilon\,\OPT$, we end up with
a graph of bounded treewidth.

Since bounded-treewidth graphs bear some similarity to trees,
several tools have been developed for solving optimization
problems on them.
Standard techniques, see Appendix~\ref{app:btw}, allow us
to obtain PTASs for several Steiner network problems on 
graphs of bounded treewidth.

\begin{theorem}\label{thm:btw}
 \pcst, \pcs and \pctsp admit PTASs on bounded-treewidth graphs.
\end{theorem}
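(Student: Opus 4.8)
The plan is to solve each of \pcst, \pcs and \pctsp by bottom-up dynamic programming over a tree decomposition of the input graph. Since the treewidth $w$ is a constant, this will in fact run in polynomial time and return the exact optimum, which is certainly enough for the stated PTAS. First I would fix a rooted nice tree decomposition of $G$ of width $w$, with the usual leaf, introduce-vertex, introduce-edge, forget and join nodes, and process it from the leaves upward. Two standard structural facts drive the accounting: each edge is introduced at exactly one node and each vertex is forgotten at exactly one node, and if a vertex disappears inside one branch of a join node then it occurs nowhere in the other branch. Because the demands in these three problems all share the root $r$ and the objective adds up the contributions of the unsatisfied demands independently, a partial solution inside a subtree is a subgraph whose only loose ends are the vertices of the current bag, so its interaction with the rest of the instance is captured by a constant-size \emph{signature}.

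For \pcst the signature at a node with bag $X$ records a subset $Y\subseteq X$ of bag vertices lying on the partial solution, a partition of $Y$ into blocks according to which of them are already in a common component of the subgraph built so far, and a marker on the block (if any) that contains $r$ once $r$ has been forgotten; the table value is the minimum of $\len(F')$ plus the total penalty of the forgotten demands left unsatisfied by $F'$, where we forbid any already-complete component not containing $r$ since such a component is never useful. The transitions are routine: an introduce-edge node optionally buys the new edge, merging the blocks of its endpoints and adding its length; a forget node deletes $v$ from $X$, adds $v$'s demand penalty if the corresponding terminal is not on the partial solution, and otherwise keeps $v$'s block alive; a join node combines two mutually consistent child signatures by unioning the chosen edge sets, taking the transitive closure of the union of the two partitions, and summing the two accumulated penalties, which by the disjointness fact above do not overlap. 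The answer is read off at the (empty) root, where we require a single component containing $r$. For \pcs and \pctsp the signature is enlarged in the usual way for path/cycle problems on bounded treewidth: we additionally record the degree ($0$, $1$ or $2$) of each vertex of $Y$ in the partial solution, together with a pairing of the degree-$1$ bag vertices telling which pairs of them are joined below by a path of the partial solution; the accepting condition at the root is a single cycle containing $r$ for \pctsp and a single path containing $r$ for \pcs. There are only $O_w(1)$ signatures per bag and $O(n)$ bags, and each transition costs $O_w(1)$, so the running time is polynomial, and correctness is a standard induction over the decomposition.

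The main obstacle is getting the two global consistency conditions right at the same time. First, the degree-and-pairing state must genuinely force the final object to be a single path or cycle rather than a vertex-disjoint union of paths and cycles: this is precisely the classical subtlety in treewidth dynamic programs for Hamiltonicity and \textsf{TSP}, and I would handle it by carrying the endpoint pairing (equivalently, a non-crossing/Catalan structure) plus the small bookkeeping for the endpoints of the stroll, and discarding any partial configuration whose pairing and parities cannot be completed to the required single structure at the root. Second, edge lengths and demand penalties must each be counted exactly once across join nodes; I would enforce this by fixing the convention that an edge's length is paid only at its introduce-edge node and a demand's penalty only at the forget node of the relevant terminal, so that the two branches of any join contribute over disjoint sets of edges and terminals. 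With these conventions in place the induction goes through without difficulty, the dynamic program returns the exact optimum, and in particular each of \pcst, \pcs and \pctsp admits a PTAS (indeed an exact polynomial-time algorithm) on bounded-treewidth graphs.
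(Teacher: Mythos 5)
Your proposal is correct and is essentially the same approach as the paper's: a bottom-up dynamic program over a nice tree decomposition whose table entries record which bag vertices are on the partial solution together with a partition/pairing describing their connectivity within the processed subtree, with penalties charged at forget nodes and everything read off at the root. The paper's exposition differs in small implementation choices (it introduces edges via distance at introduce-vertex nodes rather than via explicit introduce-edge nodes, and it only sketches the pairing state for \pcs/\pctsp), but the invariants, the accounting conventions for avoiding double-charging at joins, and the conclusion that an exact polynomial-time algorithm — hence a PTAS — follows are the same.
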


In Section~\ref{sec:planar} we show how this results in PTASs for
the above problems on planar graphs.
In particular, this is simple for \pcst since it is a special case of \spcsf.
For the other two problems, however, refer to the discussion in Section~\ref{sec:planar}.

In contrast, we show \textsf{Prize-Collecting Steiner Forest} is
APX-hard, even on planar graphs of treewidth at least two;
Hajiaghayi and Jain show the problem can be solved in polynomial
on tree metrics~\cite{HJ06}.

\begin{theorem}\label{thm:sf-hard}
 \pcsf is APX-hard on (1) planar graphs of treewidth two and on (2) the two-dimensional Euclidean metric.
\end{theorem}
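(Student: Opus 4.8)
\emph{Proof approach.} The plan is to give a gap-preserving reduction to \pcsf on series-parallel graphs from an APX-hard problem of bounded ``local'' complexity --- for concreteness, a bounded-occurrence variant of \prob{Max 3-Sat} in which every variable occurs in a constant number of clauses (a reduction from \prob{Vertex Cover} on bounded-degree graphs would work in the same way). Given a formula $\varphi$ with $n$ variables and $m$ clauses, we will build a \pcsf instance on a series-parallel graph $G$ whose backbone is a single \emph{series} composition of $n$ \emph{variable gadgets}, one per variable. The gadget for $x_i$ will be a \emph{parallel} composition of two internally disjoint unit-length paths $T_i$ (``$x_i$ true'') and $F_i$ (``$x_i$ false'') between two consecutive backbone cut vertices; a ``selector'' demand with exactly those two endpoints gets a penalty strictly between $1$ and $2$. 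Since buying neither path costs more than $1$ while buying both costs $2$, after a normalization step (below) we may assume each selector purchases exactly one of $T_i,F_i$, which defines a truth assignment. Along $T_i$ (resp.\ $F_i$) we reserve a constant number of ``tap'' vertices, one for each clause containing the literal $x_i$ (resp.\ $\overline{x_i}$); this is possible because occurrences are bounded.

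Each clause $C$ then contributes a \emph{clause demand}. Because a series-parallel graph has no $K_4$ minor, a single demand cannot reach the (possibly far apart) gadgets of the three variables of $C$ through private auxiliary edges; instead the clause demand will be routed \emph{along the backbone} between those gadgets, and inside each of the three relevant gadgets we add a constant number of extra parallel unit edges and place the reserved taps so that the clause demand can ``reuse'' a literal path --- at no extra cost --- precisely when that literal path has been purchased by its selector, i.e.\ precisely when the literal is true. The clause demand gets a small \emph{base} length (for the intermediate backbone gadgets, which are bought anyway by their own selectors) together with a \emph{penalty} that is incurred exactly when none of the three literals is satisfied. Tuning the selector penalties, the clause base lengths, the clause penalties and the number of taps, we obtain: (completeness) an assignment falsifying $k$ clauses yields a solution of cost $N+k$, for an explicit $N=N(\varphi)=\Theta(m)$; and (soundness) an arbitrary solution can be \emph{normalized} --- each selector reduced to a single path, all superfluous edges deleted, each clause demand either satisfied along its intended route or paid for by its penalty --- without increasing its cost, and a normalized solution of cost $N+k$ reads off an assignment falsifying at most $k$ clauses. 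Since bounded-occurrence \prob{Max 3-Sat} has a $\Theta(m)$ additive gap in the optimum number of falsified clauses and $N=\Theta(m)$, this is a $1+\Omega(1)$ multiplicative gap for \pcsf, proving part~(1) (series-parallel graphs are planar and have treewidth two).

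For part~(2) we realize the instance of part~(1) in the Euclidean plane. We subdivide every unit-length edge into many short segments and place all vertices at coordinates chosen so that, for each demand, the Euclidean (Steiner) distance between its endpoints matches its intended graph distance up to a factor $1\pm\delta$, while no two gadget vertices that are non-adjacent in $G$ are placed close enough for a geometric shortcut to beat the intended backbone routing by more than a $\delta$-fraction. Since only a constant number of distinct edge lengths occur and each gadget has bounded size, such an embedding exists for every $\delta>0$; taking $\delta$ small compared with the constant gap of part~(1) yields APX-hardness of \pcsf on the two-dimensional Euclidean metric.

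The step I expect to be the main obstacle is the clause-gadget construction under the series-parallel restriction: all interaction between a clause and its variables must be funneled through the one-dimensional backbone and through a bounded number of shared unit edges and taps inside each variable gadget, and one must simultaneously preclude every unintended cheap route --- a clause demand satisfied for free by edges bought for other clauses, a selector's penalty circumvented by clause-demand purchases in a way not corresponding to a consistent $\{T,F\}$ assignment, or two selectors ``sharing'' length along the backbone. This is precisely what forces both the normalization lemma and the delicate choice of the length/penalty parameters. For part~(2), the analogous difficulty, standard in Euclidean inapproximability arguments, is certifying that no geometric shortcuts exist in the embedded instance.
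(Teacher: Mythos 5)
Your proposal takes a genuinely different route from the paper, and the step you yourself flag as the ``main obstacle'' --- building the clause gadget under the series-parallel restriction --- is, I believe, a real gap rather than a delicate detail. A clause demand has only two endpoints and, in your linear backbone, its route is forced to traverse every intermediate variable gadget in series; the mechanism you describe (``the clause demand can reuse a literal path at no extra cost precisely when that literal path has been purchased'') therefore encodes a \emph{conjunction} of literal choices along the route, not a disjunction. You need the demand to become cheaply satisfiable as soon as a single literal is true, but a series composition yields cheap traversal only when every gadget on the route cooperates. To manufacture an OR you would need some rescue mechanism (paying for fragments of unbought literal paths to reattach isolated taps) together with a normalization lemma ruling out profitable sharing of such partial purchases across clauses and across selectors; neither is supplied, and it is not clear the accounting can be made to close.

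The paper sidesteps this entirely by replacing the backbone with a hub. It reduces from \prob{Vertex Cover} on $3$-regular graphs and builds a single central vertex $w$ with one pendant edge $\{w,a_i\}$ per vertex $v_i$ and one four-cycle on $w, c^1_j, b_j, c^2_j$ per edge $e_j$; demands are $\{w,b_j\}$ with penalty $3$ and $\{a_i,c^\ell_j\}$ with penalty $1$ for each vertex--edge incidence. Because every gadget shares $w$, no demand is forced through unrelated gadgets, and the disjunction of vertex cover lives inside a single \emph{parallel} composition: the demand $\{w,b_j\}$ can be satisfied via $c^1_j$ or via $c^2_j$, and each choice certifies a different endpoint of $e_j$ being outside the cover (needing both sides, which is exactly the ``both endpoints uncovered'' case, costs an extra unit). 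Each incidence demand then encodes only a conjunction of two local events (buy $\{w,a_i\}$ and buy the $c^\ell_j$-side path), which a treewidth-two graph handles directly. For part (2) the paper keeps this structure and replaces $w$ by a long vertical line of unit-spaced points (a cheap highway), spending most of the argument ruling out geometric shortcuts; your sketch for part (2) is plausible in outline but inherits the gap from part (1) and does not yet supply the shortcut-exclusion argument.
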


This is done via a reduction from \textsf{Bounded-Degree Vertex Cover}
in Appendix~\ref{sec:hard}.  Indeed, the result shows that
\textsf{Submodular Prize-Collecting Steiner Tree} (the version of the
problem when the solution has to be a connected tree instead of a
forest) is also APX-hard.  This implies the hardness of \pcsf
originates from the interaction between the penalties of terminals
rather than from the different components of the solution.

Surprisingly, the hardness also works
for Euclidean metrics, answering an open question raised in \cite{BH10}.  
This is a very rare instance where a natural network optimization problem
is APX-hard on the two-dimensional Euclidean plane. 

Theorem~\ref{thm:sf-hard} means
that planar \pcsf\ reaches a level of complexity where even though
reduction to bounded treewidth instances works, it does not give us
a PTAS for the problem (in fact, no PTAS exists unless
$\textup{P}=\textup{NP}$). However, the treewidth reduction approach
can be still useful for obtaining constant factor approximations for
planar graphs better than the factor 2.54 algorithm of \cite{HJ06} for
general graphs. Theorem~\ref{thm:reduce-submod} show that beating the 2.54 factor on
bounded treewidth graphs would immediately imply the same for planar
graphs. We pose it as an open question whether this is indeed possible
for \pcsf.


\section{Reduction to bounded-treewidth case}\label{sec:reduce}

This section focuses on proving Theorem~\ref{thm:reduce-submod}.
In fact, we prove a stronger version of the theorem, that is necessary for
obtaining PTASs for \pcst, \pctsp, and \pcs.
We reduce an instance $(G, \D, \pi)$ of \spcsf to an instance
$(H, \D, \pi')$ where $H$ has bounded treewidth and
$\pi'$ has a structure similar to $\pi$; in particular, for some $\dunsat \subseteq \D$
we define
$\pi'(D) :=  \pi(D \cup \dunsat)$ for all $D \subseteq \D$.
Notice that if $\pi$ is submodular, then so is $\pi'$.
Moreover, if $\pi$ models a \pcsf instance, 
i.e., $\pi$ is an additive function,
then $\pi'(D) - \pi'(\emptyset)$ models a \pcsf instance, too.  
In fact, $\pi'(D)$ is an additive function that is shifted
with a fixed amount $\pi'(\emptyset)$.
Same condition holds for \pcst, \pctsp and \pcs.
Therefore, after reducing a \pcst instance, we are left with a
\pcst instance---rather than an \spcsf one---on a bounded-treewidth graph.

The proof has three steps:
\begin{enumerate}
 \item We start with an instance $(G, \D, \pi)$ of \spcsf.
We first take out a subset, say $\dunsat$, of demands 
whose cost of satisfying is too much compared to
their penalties.  Thus, we can focus on the remaining demands, 
say $\dsat := \D\setminus\dunsat$.
 \item Afterwards, we partition the remaining demands $\dsat$ into $\D_1, \D_2, \dots, \D_p$
such that, roughly speaking, \spcsf can be solved separately on each of the demand sets without
increasing the total cost substantially.
 \item Finally, we build a \emph{spanner} for each demand set $\D_i$,
and use similar ideas as in \cite{BHM10} to reduce the problem to bounded-treewidth
graphs.
\end{enumerate}

The first step is carried out in the following theorem.
The proof appears in Section~\ref{sec:reduce:1}, 
and uses a submodular prize-collecting clustering technique introduced
in Section~\ref{sec:submod-cluster}.
This step allows us to focus on only a subset $\dsat$ of demands, and ignore the
rest of the demands.  The additional cost due to this is only $\eps\OPT$.

\begin{theorem}\label{thm:reduce}
  Given an instance $(G, \D, \pi)$ of \spcsf (or \spctsp or \spcs)
and a parameter $\eps > 0$,
we can construct in polynomial time a subgraph $F$ of $G$, satisfying only a subset 
$\dsat\subseteq\D$ of demands, in effect leaving $\dunsat := \D \setminus \dsat$ unsatisfied,
such that
\begin{enumerate}
 \item $\len(F) \leq (6\eps^{-1} + 3) \OPT$, and 
 \item the optimum of $(G, \dsat, \pi')$ is at most $(1+\eps)\OPT$
where $\pi'(D) := \pi(D \cup \dunsat)$ is defined for $D\subseteq \dsat$.
\end{enumerate}
\end{theorem}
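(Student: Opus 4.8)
The plan is to run a two-phase moat-growing (prize-collecting clustering) procedure tailored to the submodular penalty, and then read off both the subgraph $F$ and the subset $\dunsat$ from its output. First I would invoke the submodular prize-collecting clustering algorithm of Section~\ref{sec:submod-cluster}: assign to each demand pair $d_i=\{s_i,t_i\}$ a potential budget of order $\eps^{-1}$ times (a scaled version of) its marginal penalty, grow moats simultaneously around the sources and sinks, and let a demand's moat stop growing ("deactivate") once the two sides have merged into one component (the demand is then tentatively *satisfied*) or once the combined potential charged to that demand is exhausted (the demand is then tentatively *unsatisfied*). The edges bought by the clustering form the subgraph $F$; the set of demands whose potential ran out before their two endpoints merged is $\dunsat$, and $\dsat := \D\setminus\dunsat$. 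The first conclusion, $\len(F)\le(6\eps^{-1}+3)\OPT$, should follow from the standard moat-growing accounting: the total length purchased is at most the total potential consumed, which by construction is $O(\eps^{-1})$ times the total penalty of the demands that end up unsatisfied, and this penalty is in turn at most $\OPT$ (since the optimum must itself pay at least that penalty or else pay even more to connect those demands — here one uses that $F^1$-type reasoning bounds the relevant penalties, but really one just needs that every feasible solution, in particular the optimum, has cost $\ge$ the penalty of any set it leaves unsatisfied, combined with monotonicity). The additive "$+3$" term is the usual slack from rounding up the last moat or from an initial constant-factor solution.

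The second conclusion is the delicate part. I would need to show that $\OPT(G,\dsat,\pi')\le(1+\eps)\OPT$, where $\pi'(D)=\pi(D\cup\dunsat)$. Take the optimal solution $F^*$ for the original instance $(G,\D,\pi)$, with satisfied set $\dsat^*$ and unsatisfied set $\dunsat^*$. Restrict $F^*$ to $\dsat$: it satisfies $\dsat^*\cap\dsat$, leaving $\dsat\setminus\dsat^*$ unsatisfied in the new instance, so its cost there is $\len(F^*)+\pi'(\dsat\setminus\dsat^*) = \len(F^*)+\pi\big((\dsat\setminus\dsat^*)\cup\dunsat\big)$. I must bound this by $(1+\eps)\big(\len(F^*)+\pi(\dunsat^*)\big)$. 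Since $\dunsat^* = \D\setminus\dsat^* \supseteq \dsat\setminus\dsat^*$ and $\dunsat^*\cup\dunsat \supseteq (\dsat\setminus\dsat^*)\cup\dunsat$, monotonicity gives $\pi\big((\dsat\setminus\dsat^*)\cup\dunsat\big)\le\pi(\dunsat^*\cup\dunsat)$. Now I would bound $\pi(\dunsat^*\cup\dunsat)-\pi(\dunsat^*)$ — the *marginal* cost of adding $\dunsat$ on top of $\dunsat^*$ — by submodularity: this marginal is at most $\pi(\dunsat)-\pi(\dunsat\cap\dunsat^*)\le\pi(\dunsat)$, and more usefully, summing marginals demand-by-demand, it is at most $\sum_{d\in\dunsat\setminus\dunsat^*}(\text{marginal penalty of }d)$. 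The key point emphasized in the overview is that each such demand $d\in\dunsat$ that $F^*$ *does* satisfy had its clustering potential — which is $\Theta(\eps^{-1})$ times its marginal penalty — fully consumed, and that consumed potential can be charged against the length $F^*$ spends inside $d$'s final moat to connect its two endpoints; a disjointness argument over the moats of distinct exhausted demands shows these charges sum to at most $\eps\cdot\len(F^*)\le\eps\OPT$. Hence $\pi(\dunsat^*\cup\dunsat)\le\pi(\dunsat^*)+\eps\OPT$, giving the claimed $(1+\eps)\OPT$ bound.

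The main obstacle, as the paper itself flags, is precisely this charging step under submodularity: it is *not* true that the optimum pays more than $\pi(\dunsat)$ to connect the demands of $\dunsat$ it satisfies, because the penalty $\pi(\dunsat)$ could be large while the *marginal* penalties are small, or vice versa. The resolution is to work entirely with marginal penalties: one orders the demands of $\dunsat$ that $F^*$ satisfies, peels them off one at a time, and at each step uses (i) submodularity to say the marginal penalty in the clustering (computed with respect to whatever conditioning set the algorithm used) upper-bounds the true marginal in $F^*$'s ordering, and (ii) the moat-disjointness invariant of the two-phase clustering to charge the $\eps^{-1}$-scaled marginal potential to a dedicated, edge-disjoint portion of $\len(F^*)$. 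Getting the bookkeeping of "which conditioning set" consistent between the algorithm's greedy order and $F^*$'s order — so that submodularity can actually be applied — is where the real care is needed, and it is the reason the clustering is done in the specific source-sink-then-single-node two-phase form described in the overview. Once that invariant is set up cleanly, both parts of the theorem fall out of routine moat-growing inequalities.
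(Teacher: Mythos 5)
The high-level plan — run the submodular prize-collecting clustering with penalties scaled by $\eps^{-1}$, output the forest and the dead demands, and charge marginal-penalty increases against $\len(F^*)$ via a disjointness/``coloring'' argument — is essentially the paper's strategy. But two of your concrete steps are wrong, and a crucial preprocessing step is missing.

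First, the paper does \emph{not} run \algo{Submodular-PC-Clustering} on the raw instance. It first computes a $3$-approximate solution $F^+$ (via \cite{HKKN10}), then builds $G^\ast$ by setting the length of every edge of $F^+$ to zero, and runs the clustering on $(G^\ast, \D, \eps^{-1}\pi)$. You hedge on whether this is needed and then dismiss it; that dismissal is where your length bound breaks. Your claimed chain is ``$\len(F)\le(\text{potential consumed})\le O(\eps^{-1})\pi(\dunsat)\le O(\eps^{-1})\OPT$.'' Both inequalities in the middle are false as stated. The potential consumed is $y(\D)=y(\dsat)+y(\dunsat)$, and $y(\dsat)$ (the potential spent by demands that end up satisfied, not dead) is not controlled by $\pi(\dunsat)$ at all; without the zero-contraction there is no reason for it to be small relative to $\OPT$. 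With the zero-contraction, $y_d=0$ for every $d\in\D^+$, which gives $y(\D)=y(\D\setminus\D^+)\le\pi^\ast(\D\setminus\D^+)=\eps^{-1}\pi(\D\setminus\D^+)$, and the $3$-approximation guarantee bounds $\pi(\D\setminus\D^+)\le 3\OPT$; that is exactly where the $6\eps^{-1}+3$ comes from. Likewise your justification of $\pi(\dunsat)\le\OPT$ (``the optimum must itself pay at least that penalty or else pay even more to connect those demands'') does not follow: $\dunsat$ is the set the \emph{algorithm} drops, it need not be a subset of what $\OPT$ drops, and the cost of connecting $\dunsat$ has no a priori relation to $\pi(\dunsat)$.

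Second, for the marginal-penalty charge you invoke an ordering/``which conditioning set'' bookkeeping and the two-phase (source-sink then single-node) moat growth. Neither is used here. The second, single-node phase belongs to Theorem~\ref{thm:break}, not Theorem~\ref{thm:reduce}. The paper's argument for condition~2 is order-free: write $A:=\D^{\OPT}\cap\dunsat$ and $B:=\dunsat\setminus A$; by submodularity the penalty increase is at most $\pi(A\cup B)-\pi(B)$; since $\dunsat$ is exactly the set of dead demands, Lemma~\ref{lem:submod-growth} gives the \emph{equality} $y(A\cup B)=\pi^\ast(A\cup B)$ while Constraint~\eqref{eqn:s-lp:2} gives $y(B)\le\pi^\ast(B)$, so the increase is at most $\eps\,y(A)$; finally Lemma~\ref{lem:coloring} (the coloring/disjointness bound you gesture at) gives $y(A)\le\len(F^*)\le\OPT$. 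Your intuition for this last step is right, but the LP tightness identity $y(\dunsat)=\pi^\ast(\dunsat)$ is what replaces all the peeling/ordering machinery you were worried about. Also note the paper's potentials are not per-demand marginal budgets; the clustering enforces $\sum_{d\in D}y_d\le\pi^\ast(D)$ for \emph{every} $D\subseteq\D$ simultaneously, and that full family of constraints (verified by submodular minimization) is what makes the tightness identity and the submodularity step go through cleanly.
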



At this point, we have a constant-approximate solution satisfying all the (remaining)
demands.  
The second step is a generalization and extension of the work in \cite{BHM10}.
We are trying to break the instance into smaller pieces.  The solution to
each piece is almost independent of the others, i.e., there is little interaction between
them.  The following theorem is proved in Section~\ref{sec:reduce:2}.

\begin{theorem}\label{thm:break}
 Given are an instance $(G, \D, \pi)$ of \spcsf,
a forest $F$ satisfying all the demands,
and a parameter $\eps > 0$.
We can compute in polynomial time a set of trees $\{\hat T_1,
\dots, \hat T_k\}$, and a partition of demands $\{\D_1, \dots, \D_k\}$,
 with the following properties.
\begin{enumerate}
 \item
   All the demands are covered, i.e., $\D=\bigcup_{i=1}^k \D_i$.
 \item
   The tree $\hat T_i$ spans all the terminals in $\D_i$.
 \item
  The total length of the trees $\hat T_i$ is within a constant factor of the length of $F$, i.e., $\sum_{i=1}^k \len(\hat T_i) \leq (\frac{2}{\eps}+1)\len(F)$. 
 \item
  Let $\D^\ast$ be the subset of demands satisfied by $\OPT$.
  Define $\D^\ast_i := \D^\ast \cap \D_i$, and denote by $\sfor(G, \D)$ the
length of a minimum Steiner forest of $G$ satisfying the demands $\D$.
We have $\sum_i\sfor(G, \D^\ast_i) \leq (1+\eps) \sfor(G, \D^\ast)$.
\end{enumerate}
\end{theorem}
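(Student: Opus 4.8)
The penalty function $\pi$ is irrelevant here: the statement speaks only of $F$, of $\D$, and of minimum Steiner forests, so I would drop $\pi$ from consideration. The plan is to cluster the connected components of $F$ by a single‑potential prize‑collecting moat‑growing process — the clustering of~\cite{BHM10}, run with whole components of $F$ as its atoms — arranged so that every cluster ends up inside a wide protective ``moat band'', and then to argue that a minimum Steiner forest for $\D^\ast$ can be cut along the cluster regions while paying only an $\eps$‑fraction of extra length.

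First I would fix the partition. Let $T^{(1)},\dots,T^{(m)}$ be the connected components of $F$. Since $F$ satisfies every demand, both endpoints of each $d\in\D$ lie in a common $T^{(j)}$, so $F$ already induces a partition $\D=\bigsqcup_j\D^{(j)}$ in which $T^{(j)}$ spans the terminals of $\D^{(j)}$. Then I would run a Goemans--Williamson‑style moat‑growing process on these components as atoms, giving component $T^{(j)}$ a potential $\phi_j:=\tfrac1\eps\len(T^{(j)})$: active moats grow at unit rate, two colliding moats merge and add their potentials, and a moat freezes once the growth charged to it equals the potential it encloses. The maximal frozen moats are the clusters; the components inside a cluster can be joined by paths along the merge edges, and standard primal--dual accounting bounds the total length of all such connecting paths, over all clusters, by $2\sum_j\phi_j=\tfrac2\eps\len(F)$. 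Taking $\hat T_i$ to be cluster~$i$'s components plus its connecting paths, and $\D_i$ the union of the $\D^{(j)}$ over components $j$ of cluster~$i$, properties~1 and~2 are immediate, and
\[
\sum_i\len(\hat T_i)\ \le\ \len(F)+\tfrac2\eps\len(F)\ =\ \bigl(\tfrac2\eps+1\bigr)\len(F),
\]
giving property~3.

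The content is property~4. Let $F^\ast$ be a minimum Steiner forest for $\D^\ast$, so $\len(F^\ast)=\sfor(G,\D^\ast)$, and let $R_i$ be the region cluster~$i$'s moat absorbs; the $R_i$ are pairwise disjoint, since distinct frozen clusters never collided. Note first that there are no ``cross‑cluster'' demands: if $d=\{s,t\}\in\D_i$ then $s,t$ lie in one component $T^{(j)}$ with $j$ in cluster~$i$. I would then construct, for each $i$, a Steiner forest for $\D^\ast_i=\D^\ast\cap\D_i$ by taking $F^\ast\cap R_i$ and, for every demand of $\D^\ast_i$ whose $F^\ast$‑path ever exits $R_i$, deleting the offending ``excursion'' and reconnecting the pair inside $R_i$ using the $F$‑component that already contains both its endpoints. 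Since the $R_i$ are disjoint, $\sum_i\len(F^\ast\cap R_i)\le\len(F^\ast)$; the point is then to show that the total length of the reconnecting pieces is at most $\eps\len(F^\ast)$, which I would do by charging each reconnection to the length $F^\ast$ itself must spend exiting and re‑entering $R_i$ across the wide moat band — here one uses that the moat‑growing makes that band thick (an $\Omega(1/\eps)$ factor, because the potentials were scaled by $1/\eps$) so that a round trip across it is much longer than the cost of the local repair. Summing over $i$ yields $\sum_i\sfor(G,\D^\ast_i)\le(1+\eps)\sfor(G,\D^\ast)$, which is property~4.

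The hard part will be exactly this charging: proving that after the merges the moat region is ``thick around every constituent component'', quantitatively enough that the local repairs of $F^\ast$ can be paid for out of $F^\ast$'s own wasted round trips rather than out of the (possibly much larger) quantity $\len(F)$ — the subtle case being small components that end up near the boundary of a large cluster — and then chasing constants so that properties~3 and~4 come out with the same $\eps$. This is the moat‑analysis machinery of~\cite{BHM10}, which must be re‑derived here because the atoms are whole components of $F$ and because $F$ is only a constant‑factor (not near‑optimal) forest; submodularity of $\pi$, by contrast, creates no difficulty in this theorem.
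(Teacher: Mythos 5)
Your proposal takes essentially the same route as the paper: contract the tree components of $F$, assign each contracted vertex a potential equal to $\eps^{-1}$ times its component's length, run the prize-collecting clustering of \cite{BHM10}, and observe that property~3 is the primal--dual cost bound while property~4 is the separation property. The paper handles the last two steps by directly citing \cite[Lemmas~6 and~10]{BHM10}; the charging argument you flag as the hard part is exactly the content of the latter lemma, which the paper treats as a black box rather than re-deriving.
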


The final step is very similar to the spanner construction of \cite{BHM10,BKM07}.
Since it has been extensively covered in those works, we defer
the details to the full version of the paper.


Now we show how the above theorems imply the main theorem of the paper.
\begin{proof}[\proofname\ of Theorem~\ref{thm:reduce-submod}]
 Start with an instance $(G, \D, \pi)$ of \spcsf.
Without loss of generality we present an approximation guarantee of $\alpha + O(1) \eps$.
Find $F$, $\dsat$ and $\dunsat$ from applying Theorem~\ref{thm:reduce} on $(G, \D, \pi)$.
We know that $F$ satisfies $\dsat$ and $\len(F) = O(\OPT)$.
Moreover, $\OPT_{\dsat}(G) \leq \OPT$.
Define $\pi^+(D) := \pi(D \cup \dunsat)$ for all $D \subseteq \D$.
Clearly the optimal solution of $(G, \dsat, \pi^+)$ costs no more than $(1+\eps)\OPT$.
Pick $\eps' < \eps\cdot \len(F)/\OPT$ and 
feed $(G, \dsat, \pi^+)$ along with $F$ and $\eps'$ to Theorem~\ref{thm:break}, 
in order to obtain $\D_i$'s and $\hat T_i$'s for $i=1, \ldots, k$.
We have $\sum_i\len(\hat T_i) = O(\len(F)) = O(\OPT)$ since $\eps'$ is a constant.
In addition, the theorem guarantees a near-optimal solution $\OPT^+$
of cost at most $(1+2\eps)\OPT$ that does not use the
connectivitiy of different components $\D_i$ and $\D_{i'}$ for $i,i'\in\{1,\ldots,k\}:i\neq i'$. 
This ensures that the spanner construction gives us a graph $G^+$ (of total length $O(\OPT)$)
that approximate the forest of the solution within a $1+\eps$ factor.
Thus, the optimal solution of $(G^+, \dsat, \pi^+)$ costs
at most $(1+\eps)(1+2\eps) \OPT = [1+O(1)\eps] \OPT$.
Since the total length of the graph $G^+$ is within $O(\OPT)$,
we can use the decomposition theorem of \cite{DHM07}
to reduce the problem to bounded-treewidth graphs with an increase
of $\eps\OPT$ in the solution cost.
The reduced instance is solved via the $\alpha$-approximation algorithm,
and we finally get an approximation ratio of $\alpha+O(\eps)$.
%
\end{proof}


%

\subsection{Submodular prize-collecting clustering}\label{sec:submod-cluster}

First we present and analyze a primal-dual algorithm for 
\spcsf, and later 
we see how this algorithm can be used to achieve the goal of
identifying and removing certain demands from the optimal solution
such that the additional penalty is negligible.

Consider an instance $(G(V,E), \D, \pi)$ of the \spcsf.
A set $S \subseteq V$ is said to \emph{cut} a demand $d=\{s,t\}$
if and only if $|S \cap d| = 1$.  We denote this by the short-hand $d \odot S$,
and say the demand $d$ \emph{crosses} the set $S$.
In the linear program~\eqref{eqn:s-lp:1}--\eqref{eqn:s-lp:3},
there is a variable $y_{S,d}$ for any $S\subseteq V$, $d\in\D$ such
that $d \odot S$.
Conveniently, we use the short-hands $y_S := \sum_{d\in\D} y_{S,d}$
and $y_d := \sum_{S\subseteq V} y_{S,d}$.

\begin{lp}
& \sum_{S: e\in\delta(S)}\: y_S \leq c_e &\hspace{3cm}&&
       \forall e\in E \label{eqn:s-lp:1}\\
& \sum_{d\in D} y_d \leq \pi(D) &&&\forall D\subseteq \D  \label{eqn:s-lp:2}\\
& y_{S,d} \geq 0 &&&\forall d\in\D, S\subseteq V, d\odot S. \label{eqn:s-lp:3}
\end{lp}

 We produce a solution to the above LP.
Theorem~\ref{thm:reduce} is proved via some properties of this solution.
 These constraints look like the dual of a natural linear program for 
\spcsf.
For the sake of convenience, we use the notation $y(D) := \sum_{d\in D} y_d$ for any $D \subseteq \D$.

\begin{lemma}\label{lem:submod-growth}
Given an instance $(G, \D, \pi)$ of \spcsf,
we produce in polynomial time a forest $F$ and a subset $\dunsat \subseteq \D$ of demands, 
along with a feasible vector $y$ for the above LP such that
\begin{enumerate}
 \item $y(\dunsat) = \pi(\dunsat)$;
 \item $F$ satisfies any demand in $\dsat := \D \setminus \dunsat$;
and
 \item $\len(F) \leq 2 y(\D)$.
\end{enumerate}
\end{lemma}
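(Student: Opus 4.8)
The plan is to run a primal-dual moat-growing algorithm on the LP \eqref{eqn:s-lp:1}--\eqref{eqn:s-lp:3}, in the spirit of the Goemans--Williamson \pcst algorithm and the Agrawal--Klein--Ravi forest algorithm, but with the single-vertex potentials replaced by the submodular budget constraint \eqref{eqn:s-lp:2}. We maintain a set of active moats (connected components of the current forest $F$ that still cross at least one unsatisfied demand), and we raise the dual variables $y_{S,d}$ uniformly on all active moats $S$, distributing the growth among the demands $d$ with $d\odot S$. Growth of a moat $S$ stops when one of two events happens: (i) an edge constraint \eqref{eqn:s-lp:1} becomes tight, in which case we add that edge to $F$ and merge the two moats; or (ii) a submodular constraint $\sum_{d\in D} y_d \le \pi(D)$ becomes tight for some $D\subseteq\D$, in which case every demand in that tight set $D$ is declared \emph{unsatisfied} and its moats are deactivated. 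We continue until no active moats remain. At the end we perform the usual reverse-delete (prune) step, removing edges of $F$ whose removal does not disconnect any demand in $\dsat$. The output is the pruned forest $F$, the set $\dunsat$ of demands that were frozen by a tight submodular constraint, and the dual vector $y$.

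Feasibility of $y$ is maintained as an invariant: we never violate \eqref{eqn:s-lp:1} because we add an edge the instant its constraint is tight, and we never violate \eqref{eqn:s-lp:2} because we freeze a demand (remove it from all further growth) the instant it lies in a tight set. Property 2 ($F$ satisfies $\dsat$) is guaranteed because a demand's moats stay active and keep growing as long as the demand's two endpoints are in distinct components and the demand is not frozen, so at termination every non-frozen demand is connected in $F$ (and reverse-delete is performed so as to preserve exactly the connectivity of $\dsat$). Property 1, $y(\dunsat)=\pi(\dunsat)$, requires a monotonicity/uncrossing argument: at the moment a demand $d$ is frozen it belongs to some tight set $D$, i.e. $y(D)=\pi(D)$; I would argue that once a set is tight it stays tight (its members never grow again, so $y(D)$ is frozen, while $\pi$ does not change), and that the union of the tight sets encountered is itself tight, using submodularity of $\pi$ in the form $\pi(A)+\pi(B)\ge \pi(A\cup B)+\pi(A\cap B)$ together with $y(A)=\pi(A)$, $y(B)=\pi(B)$ and feasibility $y(A\cap B)\le\pi(A\cap B)$ to deduce $y(A\cup B)\ge\pi(A\cup B)$, hence equality. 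Since $\dunsat$ is exactly the union of all the tight sets that ever froze a demand, this gives $y(\dunsat)=\pi(\dunsat)$.

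Property 3, $\len(F)\le 2\,y(\D)$, is the standard primal-dual accounting, and adapting it to the submodular setting is the step I expect to be the main obstacle. The usual argument charges the length of $F$ against the dual growth by showing that in each infinitesimal time step, $\sum_{\text{active moats }S}(\text{degree of }S\text{ in the final pruned }F)\le 2\cdot(\text{number of active moats})$, which after integrating over time gives $\len(F)=\sum_S y_S\cdot\deg_F(S)\le 2\sum_S y_S = 2\,y(\D)$ (here $\sum_S y_S = \sum_{S,d} y_{S,d} = y(\D)$). The degree-counting relies on the fact that contracting the active moats and looking at $F$ restricted to the current time slice yields a forest in which, after pruning, every leaf is an active moat; the average degree of a forest with no inactive leaves is at most $2$. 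The subtlety here, compared to \pcst, is that moats are deactivated by the \emph{global} submodular constraint rather than by exhausting a local potential, so I must check that a moat deactivated via a tight set $D$ is genuinely no longer a leaf that needs covering — which holds because all demands crossing it have been frozen, so reverse-delete is free to leave it as a leaf or, if it is a leaf that separates no surviving demand, delete it. I would also need the standard observation that reverse-delete only removes edges and hence only helps the degree bound. Assembling these pieces gives the lemma; the reduction theorems (Theorem~\ref{thm:reduce} etc.) are then derived from the structural properties of $F$, $\dunsat$, and $y$ produced here.
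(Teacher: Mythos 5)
Your proposal matches the paper's proof essentially step for step: a moat-growing phase where the submodular budget constraint $\sum_{d\in D}y_d\le\pi(D)$ freezes all demands in the first set that goes tight; a pruning phase; the uncrossing argument (using $y(A\cup B)=y(A)+y(B)-y(A\cap B)\ge\pi(A)+\pi(B)-\pi(A\cap B)\ge\pi(A\cup B)$ together with feasibility to force equality, and the observation that a tight set is contained in $\dunsat$ and its $y$-values never grow again) to show $y(\dunsat)=\pi(\dunsat)$; and the epoch-by-epoch degree-counting argument (contract moats, observe the result is a forest whose leaves are all active after pruning, hence active degrees sum to at most twice the number of active moats) for $\len(F)\le 2y(\D)$. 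The one cosmetic difference is the pruning rule: the paper prunes an edge $e$ whenever some cluster $S$ in the laminar family $\eS$ cuts no live demand and has $F_2\cap\delta(S)=\{e\}$, whereas you describe a reverse-delete based on $\dsat$-feasibility; these are interchangeable here because any dead leaf cluster separates no demand of $\dsat$, so both rules eliminate dead leaves, which is the only property the degree argument uses. The only point worth flagging for a full write-up is the one you did not dwell on: in the submodular setting, detecting when a constraint \eqref{eqn:s-lp:2} first goes tight (i.e., computing the growth step $\eta$) is nontrivial because there are exponentially many sets $D$; the paper handles this by solving an auxiliary LP whose separation oracle is a submodular-function minimization. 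Beyond that computational remark, your route and the paper's are the same.
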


The solution is built up in two stages.
First we perform an \emph{submodular growth} to find a forest $F_1$ and a corresponding $y$ vector.
This is different from the usual growth phase of \cite{GW95, AKR91} in that
the penalty function may go tight for a set of vertices that are not currently connected.
In the second stage, we \emph{prune} some  edges of $F_1$ to obtain another forest $F_2$.
Below we describe the two phases of Algorithm~\ref{alg:submod-pc-cluster} (\algo{Submodular-PC-Clustering}).

\paragraph{Growth}
 We begin with a zero vector $y$, and an empty set $F_1$.
 A demand $d\in\D$ is said to be \emph{live} if and only if
$x(D) < \pi(D)$ for any $D \subseteq \D$ that $d\in D$.  
If a demand is not live, it is \emph{dead}.
 During the execution of the algorithm \algo{Submodular-PC-Clustering}, 
we maintain a partition $\C$ of vertices $V$ into clusters; 
it initially consists of singleton sets.
 Each cluster is either \emph{active} or \emph{inactive};
  the cluster $C\in\C$ is \emph{active} if and only if there is a live demand $d: d\odot C$.
We simultaneously \emph{grow} all the active clusters by $\eta$.
 In particular, if there are $\kappa(C)>0$ live demands crossing an active cluster $C$, 
we increase $y_{C,d}$ by $\eta/\kappa(C)$ for each live demand $d: d\odot C$.
 Hence, $y_{C}$ is increased by $\eta$ for every active cluster $C$.
 We pick the largest value for $\eta$ that does not violate any of the constraints in \eqref{eqn:s-lp:1} or \eqref{eqn:s-lp:2}.
 Obviously, $\eta$ is finite in each iteration because the 
values of these variables cannot be larger than $\pi(\D)$.
 Hence, at least one such constraint goes tight after each growth step.
 If this happens for an edge constraint for $e=(u,v)$,
 then there are two clusters $C_u\ni u$ and $C_v\ni v$ in $\C$,
 at least one of which is growing.
 We merge the two clusters into $C = C_u \cup C_v$ by adding the 
edge $e$ to $F_1$, remove the old clusters and add the new one to $\C$.
 Nothing needs to be done if a constraint \eqref{eqn:s-lp:2} becomes tight.
The number of iterations is at most $2|V|$ because at each 
event either a demand dies, or the size of $\C$ decreases.

Computing $\eta$ is nontrivial here.
In particular, we have to solve an auxiliary linear program to find its value.
New variables $y^\ast_{S,d}$ denote the value of vector $y$ after a growth of
size $\eta$.  All the constraints are written for the new variables.
There are exponentially many constraints in this LP, however,
it admits a separation oracle and thus can be optimized.\footnote{
Notice that there are only a polynomial number of non-zero variables at each step
since $y_{S,d}$ may be non-zero only for clusters $S$, and these clusters form a
laminar family in our algorithm.
Verifying constraints \eqref{eqn:eta:1}-\eqref{eqn:eta:3} and \eqref{eqn:eta:5} is very simple.
Verifying constraints \eqref{eqn:eta:4} is equivalent to finding 
$\min_{D\subseteq \D} \pi(D) - y^\ast(D)$ and checking that it is non-negative.
The function to minimize is submodular and thus can be minimized in polynomial time~\cite{IFF00}.
A standard argument shows that the values of these variables have polynomial size.
We defer to the full version of the paper the detailed discussion of
how the LP can be approximated. 
} 
\begin{lp}
\maximize &\eta \label{eqn:eta:0}\\
\subject& y^\ast_{S,d} = y_{S,d} + \frac{\eta}{\kappa(S)} &&&\forall d\in \D, S\subseteq V, d\odot S, \kappa(S) > 0 \label{eqn:eta:1}\\
& y^\ast_{S,d} = y_{S,d} &&&\forall d\in \D, S\subseteq V, d\odot S, \kappa(S) = 0 \label{eqn:eta:2}\\
&\sum_{S: e\in\delta(S)}\: y^\ast_S \leq c_e &\hspace{0cm}&&
       \forall e\in E \label{eqn:eta:3}\\
& \sum_{d\in D} y^\ast_d \leq \pi(D) &&&\forall D\subseteq \D  \label{eqn:eta:4}\\
& y^\ast_{S,d} \geq 0 &&&\forall d\in\D, S\subseteq V, d\odot S. \label{eqn:eta:5}
\end{lp}

\xxx{maybe one intuitive par on the growth process}
\xxx{TODO:mention at some point why the submodular algorithm is so strange.  many simpler methods fail!}

\paragraph{Pruning}
Let $\eS$ denote the set of all clusters formed during the execution
of the growth step.  It can be easily observed that the clusters $\eS$
are laminar and the maximal clusters are the clusters of $\C$. 
In addition, notice that $F_1[C]$ is connected for each $C\in\eS$.
 
 Let $\B\subseteq\eS$ be the set of all clusters $C$ that do not
cut any live demand.  
Notice that a demand $d$ may still be live at the end of the growth stage
if it is satisfied; roughly speaking, the demand is satisfied before it exhausts
its \emph{potential}.
In the pruning stage, we
iteratively remove edges from $F_1$ to obtain $F_2$.  More
specifically, we first initialize $F_2$ with $F_1$.  Then, as long as
there is a cluster $S\in\B$ 
such that $F_2\cap\delta(S) = \{e\}$, we remove the edge $e$ from
$F_2$. 

A cluster $C$ is called a \emph{pruned cluster} if it is pruned in the
second stage in which case, $\delta(C)\cap F_2 = \emptyset$.  Hence, a
pruned cluster cannot have non-empty and proper intersection with a
connected component of $F_2$.

\begin{algorithm}
\caption{\algo{Submodular-PC-Clustering}\label{alg:submod-pc-cluster}}
\textbf{Input:} Instance $(G(V,E), \D, \pi)$ of \prob{Generalized prize-collecting Steiner forest}\\
\textbf{Output:} Forest $F$, subset of demands $\dunsat$ and fractional solution $y$.

\begin{algorithmic}[1]
\STATE Let $F_1\gets\emptyset$.
 \STATE Let $y_{S,d}\gets 0$ for any $d\in\D, S\subseteq V, d\odot S$.
 \STATE Let $\eS\gets\C\gets\left\{ \{v\}: v\in V^\ast\right\}$.
 \WHILE {there is a live demand}
  \STATE Compute $\eta$ via LP~\eqref{eqn:eta:0}:
the largest possible value such that simultaneously 
increasing  $y_{C}$ by $\eta$ for all active clusters $C \in \C$ does not violate Constraints~\eqref{eqn:s-lp:1}-\eqref{eqn:s-lp:3}.
  \STATE Let $y_{C,d}\gets y_{C,d}+\frac{\eta}{\kappa(C)}$ for all live demands $d$ crossing clusters $C \in \C$, i.e., $d \odot C$.
  \IF {$\exists e\in E$ that is tight and connects two clusters $C_1$ and $C_2$}
   \STATE Pick one such edge $e=(u,v)$.
   \STATE Let $F_1\gets F_1\cup \{e\}$.
   \STATE Let $C \gets C_1 \cup C_2$.
   \STATE Let $\C\gets \C\cup \{C\} \setminus \{C_1, C_2\}$.
   \STATE Let $\eS \gets \eS \cup \{C\}$.
  \ENDIF
\ENDWHILE
\STATE Let $F_2 \gets F_1$.
\STATE Let $\B$ be the set of all clusters $S\in\eS$ that
do not cut any live demands.
\WHILE {$\exists S\in\B$ such that $F_2\cap\delta(S)=\{e\}$ for an edge $e$}
  \STATE Let $F_2 \gets F_2 \setminus \{e\}$.
\ENDWHILE
\STATE Let $\dunsat$ denote the set of dead demands.
\STATE Output $F := F_2$, $\dunsat$ and $y$.
\end{algorithmic}
\end{algorithm}



We first bound the length of the forest $F$.
The following lemma is similar to the analysis of the algorithm in~\cite{GW95}.
However, we do not have a primal LP to give a bound on the dual.
Rather, the upper bound for the length is $\pi(\D)$.
In addition, we bound the cost of a forest $F$ that may have
more than one connected component, whereas the prize-collecting Steiner tree
algorithm of~\cite{GW95} finds a connected graph at the end.
\begin{lemma}\label{lem:submod-growth:cost}
 The cost of $F_2$ is at most $2y(\D)$.
\end{lemma}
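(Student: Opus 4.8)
The plan is to carry out a Goemans--Williamson--style primal--dual accounting, charging $\len(F_2)$ against the dual vector $y$ one growth \emph{epoch} at a time, where an epoch is one iteration of the growth loop together with its value of $\eta$.

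First I would put the two quantities of interest in a common form. Every edge $e\in F_2\subseteq F_1$ was added only when its constraint \eqref{eqn:s-lp:1} became tight, and since the $y$-values never decrease, at termination $c_e=\sum_{S:\,e\in\delta(S)}y_S$. Hence, writing $\deg_{F_2}(S):=|\delta(S)\cap F_2|$ and noting that $y_S>0$ only for the laminar clusters $S\in\eS$,
\[
\len(F_2)=\sum_{e\in F_2}c_e=\sum_{e\in F_2}\ \sum_{S:\,e\in\delta(S)}y_S=\sum_{S}y_S\,\deg_{F_2}(S).
\]
On the other hand $y(\D)=\sum_d y_d=\sum_S\sum_d y_{S,d}=\sum_S y_S$, so it suffices to prove $\sum_S y_S\,\deg_{F_2}(S)\le 2\sum_S y_S$. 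Next I would split this inequality over epochs: in the epoch with parameter $\eta_t$ exactly the currently active clusters $C$ have $y_C$ raised by $\eta_t$, so $y_S=\sum_{t:\,S\ \text{active in epoch }t}\eta_t$ for every $S\in\eS$, and both sides become sums over epochs. It therefore suffices to show, for each epoch $t$ with current partition $\C_t$, that
\[
\sum_{C\in\C_t\ \text{active}}\deg_{F_2}(C)\ \le\ 2\,\bigl|\{C\in\C_t:\,C\text{ active}\}\bigr|.
\]

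To get the per-epoch bound I would form the graph $H_t$ with vertex set $\C_t$ having one edge for each edge of $F_2$ whose two endpoints lie in different clusters; then $\deg_{H_t}(C)=\deg_{F_2}(C)$. Since no edge is ever added inside an already-formed cluster (the growth step only connects two distinct clusters), $F_1[C]$ is connected for every $C\in\eS$; thus contracting the blocks of $\C_t$ in the forest $F_1$ yields a forest, and $H_t$, being a subgraph of it, is also a forest. Now a cluster that is inactive in epoch $t$ crosses no live demand from epoch $t$ onward --- because $y$ only increases, a demand once dead stays dead, so the set of live demands only shrinks --- and therefore belongs to $\B$; but the pruning loop halts precisely when no member of $\B$ has $F_2$-degree equal to $1$, so every inactive $C\in\C_t$ has $\deg_{F_2}(C)\in\{0\}\cup\{2,3,\dots\}$. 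Deleting the isolated vertices of $H_t$, the remaining forest has $N_a$ active and $N_d$ inactive vertices and at most $N_a+N_d-1$ edges, so
\[
\sum_{C\ \text{active}}\deg_{F_2}(C)=2\,|E(H_t)|-\!\!\sum_{C\ \text{inactive}}\!\!\deg_{F_2}(C)\ \le\ 2(N_a+N_d-1)-2N_d=2N_a-2,
\]
which is at most $2N_a$, and $N_a$ is at most the number of active clusters of $\C_t$ (isolated active clusters contribute $0$ on the left); the case with no edges is trivial. Summing over epochs gives $\len(F_2)\le 2y(\D)$.

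I expect the genuinely delicate point to be this last step: the identification ``inactive in epoch $t$ $\Rightarrow$ lies in $\B$'', which needs the monotonicity of the live-demand set, together with the use of the exact stopping rule of the pruning loop to forbid inactive leaves of $H_t$. Everything else --- tightness of the retained edges, the per-epoch decomposition, and the forest degree count --- is routine bookkeeping of the kind familiar from \cite{GW95}, with the only twist that the upper bound is the total dual $y(\D)$ (which is at most $\pi(\D)$ by feasibility) rather than the value of a primal LP.
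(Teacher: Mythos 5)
Your proof is correct and follows essentially the same path as the paper's: the same epoch decomposition $y_S=\sum_t\eta_t\cdot\mathbf{1}[S\text{ active in epoch }t]$, the same contraction of the epoch-$t$ partition $\C_t$ into a forest $H_t$, the same key observation that an inactive cluster of epoch $t$ lies in $\B$ (since the live set only shrinks) and hence cannot be a leaf of $H_t$ after pruning, and the same degree count yielding the factor $2$. The only small difference is that you explicitly delete isolated supernodes before counting degrees, which tightens a point the paper glosses over (it writes ``the degree of dead clusters is at least two,'' silently excluding degree-zero ones); this is a more careful phrasing of the same argument, not a different route.
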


\begin{proof}
Recall that the growth phase has several events corresponding to an edge or set constraint going tight.
We first break apart $y$ variables by epoch.  Let $t_j$ be the
  time at which the $j^{\rm th}$ event point occurs in the growth phase ($0=t_0\leq t_1 \leq t_2 \leq \cdots$), so the $j^{\rm th}$ epoch is the interval of time from $t_{j-1}$ to $t_j$.  For each cluster $C$, let $y_{C}^{(j)}$  be the amount by which $y_C$ grew during epoch $j$, which is  $t_j-t_{j-1}$ if it was active during this epoch, and zero otherwise.  Thus, $y_C = \sum_j y_{C}^{(j)}$.
Because each edge $e$ of $F_2$
  was added at some point by the growth stage when its edge packing constraint \eqref{eqn:s-lp:1} became tight, we can exactly apportion the cost
  $c_e$ amongst the collection of clusters $\{C : e\in \delta(C)\}$ whose
  variables ``pay for'' the edge, and can divide this up further
  by epoch.  In other words, $c_e = \sum_j \sum_{C:e\in \delta(C)}
  y_{C}^{(j)}$.  We will now prove that the total edge cost from $F_2$
  that is apportioned to epoch $j$ is at most $2 \sum_{C} y_{C}^{(j)}$.  In other words, during each epoch,
  the total rate at which edges of $F_2$ are paid for by all active
  clusters is at most twice the number of active clusters.
  Summing over the epochs yields the desired conclusion.

  We now analyze an arbitrary epoch $j$.  Let $\C_j$ denote the set
  of clusters that existed during epoch $j$.
Consider the graph $F_2$, and then collapse each cluster
  $C \in \C_j$ into a supernode.  Call the resulting graph $H$.
Although the nodes of
  $H$ are identified with clusters in $\C_j$, we will continue to refer
  to them as clusters, in order to to avoid confusion with the nodes of
  the original graph.  Some of the clusters are active and some may be
  inactive.  Let us denote the active and inactive clusters in $\C_j$
  by $\C_{act}$ and $\C_{dead}$, respectively.
  The edges of $F_2$ that are being partially paid for during epoch $j$
  are exactly those edges of $H$ that are incident to an active cluster,
  and the total amount of these edges that is paid off during epoch
  $j$ is $(t_j-t_{j-1}) \sum_{C \in \C_{act}} \deg_H(C)$.
  Since every active cluster grows by exactly $t_j-t_{j-1}$ in epoch
  $j$, we have $\sum_{C} y_{C}^{(j)} \geq \sum_{C \in\C_j}y_{C}^{(j)} = (t_j-t_{j-1}) |\C_{act}|$. 
   Thus, it suffices to show that $\sum_{C
    \in \C_{act}} \deg_H(C) \leq 2 |\C_{act}|$.

First we
  must make some simple observations about $H$.  Since $F_2$ is a subset of the edges in $F_1$, and each cluster represents a
  disjoint induced connected subtree of $F_1$, the contraction to $H$ introduces  no cycles.  Thus, $H$ is a forest.
  All the leaves of $H$ must
  be live clusters because otherwise the corresponding cluster $C$ would be
  in $\B$ and  hence would have been pruned away.

  With this information about $H$, it is easy to bound $\sum_{C \in
    \C_{act}} \deg_H(C)$.
  The total degree in $H$ is at most $2(|\C_{act}|+|\C_{dead}|)$.
  Noticing that the degree of dead clusters is at least two,
  we get $\sum_{C\in\C_{act}}\deg_H(C) \leq 2(|\C_{act}|+|\C_{dead}|) - 2|\C_{dead}| = 2|\C_{act}|$ as desired.
\end{proof}

Now we can prove Lemma~\ref{lem:submod-growth}
that characterizes the output of \algo{Submodular-PC-Clustering}.

\begin{proof}[\proofname\ of Lemma~\ref{lem:submod-growth}]
For every demand $d \in \dunsat$ we have a set $D \ni d$
such that $y(D) = \pi(D)$.  The definition of $\dunsat$ guarantees
$D \subseteq \dunsat$.  Therefore, we have sets $D_1, D_2, \dots, D_l$
that are all tight (i.e., $y(D_i) = \pi(D_i)$) and they span $\dunsat$
(i.e., $\dunsat = \cup_i D_i$).
To prove $y(\dunsat) = \pi(\dunsat)$, we use induction
and combine $D_i$'s two at a time.
For any two tight sets $A$ and $B$ we have
$y(A \cup B) = y(A) + y(B) - y(A\cap B) = \pi(A) + \pi(B) - y(A\cap B)
\geq \pi(A) + \pi(B) -\pi(A\cap B) \geq \pi(A \cup B)$,
where the second equation follows from tightness of $A$ and $B$,
the third step is a result of Constraint~\eqref{eqn:s-lp:2},
and the last step follows from submodularity.
Constraint~\eqref{eqn:s-lp:2} has it that $\pi(A\cup B) \geq y(A \cup B)$,
therefore, it has to hold with equality.

Clearly, at the end of execution of \algo{Submodular-PC-Clustering},
any live demand is already satisfied.  Notice that such demands are not
affected in the pruning stage.  Hence, only dead demands may be not satisfied.
This guarantees the second condition.  The third condition follows from
Lemma~\ref{lem:submod-growth:cost}.
\end{proof}

\subsection{Restricting the demands}\label{sec:reduce:1}\label{sec:rest-prize}

We prove Theorem~\ref{thm:reduce} in this section.
First, we obtain a constant-factor approximate solution $F^+$---%
this can be done, e.g., via the $3$-approximation algorithm for general graphs \cite{HKKN10}.
Let $\D^+$ denote the demands satisfied by $F^+$.
 We denote by $T^+_j$ the connected components of $F^+$.
 For each demand $d=\{s,t\}\in\D^+$ we clearly have $\{s,t\} \subseteq V(T^+_j)$
for some $j$.  However, for an unsatisfied demand $d'=\{s',t'\} \in \D \setminus \D^+$,
the vertices $s'$ and $t'$ belong to  two different components of $F^+$.
Construct $G^\ast$ from $G$ by reducing the length of edges of $F^+$ to zero.
 The new penalty function $\pi^\ast$ is defined as follows:
\begin{align}\label{eqn:new-pi}
 \pi^\ast(D) &:= \eps^{-1} \pi(D) &\text{for } D \subseteq \D.
\end{align}
 Finally we run \algo{Submodular-PC-Clustering} on $(G^\ast, \D, \pi^\ast)$; see Algorithm~\ref{alg:prize-restrict}. 


\begin{algorithm}
\caption{\algo{Restrict-Demands}\label{alg:prize-restrict}}
\textbf{Input:} Instance $(G, \D, \pi)$ of \prob{Submodular Prize-Collecting Steiner Forest}\\
\textbf{Output:} Forest $F$ and $\dunsat$.

\begin{algorithmic}[1]
 \STATE Use the algorithm of Hajiaghayi et al.~\cite{HKKN10} to find a $3$-approximate 
solution:  a forest $F^+$ satisfying subset $\D^+$ of demands.
 \STATE Construct $G^\ast(V, E^\ast)$ in which $E^\ast$ is the same as $E$ except that
the edges of $F^+$ have length zero in $E^\ast$. 
 \STATE Define $\pi^\ast$ as Equation~\eqref{eqn:new-pi}.
 \STATE Call \algo{Submodular-PC-Clustering} on $(G^\ast, \D, \pi^\ast)$ to obtain the result $F$, $\dunsat$ and $y$.
 \STATE Output $F$ and $\dunsat$.
\end{algorithmic}
\end{algorithm}

Now we show that the algorithm \algo{Restrict-Demands} outlined above  satisfies the requirements
of Theorem~\ref{thm:reduce}.  Before doing so, we show how the cost of
a forest can be compared to the values of the output vector $y$.
\begin{lemma}\label{lem:coloring}
If a graph $F$ satisfies a set $\dsat$ of demands,
then $\len(F) \geq \sum_{d\in\dsat} y_d$.
\end{lemma}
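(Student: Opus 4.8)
The statement asserts that if a graph $F$ satisfies a set $\dsat$ of demands, then $\len(F) \geq \sum_{d \in \dsat} y_d$, where $y$ is the vector produced by \algo{Submodular-PC-Clustering}. The natural approach is a moat-counting / charging argument in the style of the classical Goemans--Williamson analysis: I want to show that each unit of $y_{S,d}$ "charged" to a cut $S$ around a live demand $d \in \dsat$ can be paid for by a distinct unit of edge length in $F$, because any $F$ satisfying $d$ must cross every such cut $S$. Concretely, $y_d = \sum_{S : d \odot S} y_{S,d}$, and the key structural fact is that the sets $S$ carrying nonzero $y_{S,d}$ for a fixed $d$ form a nested (laminar) chain, since they are clusters appearing over time in the growth phase and clusters only merge. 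So I would first reduce to bounding, for each $d$, the contribution $y_d$ by the length of $F$ restricted to edges separating the chain of moats around $d$.

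The plan is as follows. First I would fix $d = \{s,t\} \in \dsat$ and list the clusters $S_1 \subsetneq S_2 \subsetneq \cdots \subsetneq S_m$ that ever have $y_{S_i,d} > 0$; these are exactly the clusters containing $s$ (or symmetrically $t$) that were active while $d$ was live, and they are linearly ordered by the laminarity of $\eS$. Since $F$ satisfies $d$, there is an $s$--$t$ path in $F$, which must use at least one edge of $\delta(S_i)$ for every $i$ with $s \in S_i$, $t \notin S_i$. Second, I would like to assign to each moat $S_i$ a distinct edge of $F \cap \delta(S_i)$ and argue the assigned edges are all different across $i$; this works because the $S_i$ are nested and an $s$--$t$ path "enters and leaves" each moat, but to avoid double-charging I would instead argue at the level of the constraint~\eqref{eqn:s-lp:1}: for each edge $e \in F$, $\sum_{S : e \in \delta(S)} y_S \leq c_e$, and summing this over $e \in F$ gives $\len(F) \geq \sum_{e \in F} \sum_{S: e \in \delta(S)} y_S$. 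Third, the heart of the matter is to show $\sum_{e \in F}\sum_{S : e\in\delta(S)} y_S \geq \sum_{d \in \dsat} y_d$, i.e., that the "moat mass" seen by the edges of $F$ accounts for all of $\sum_{d\in\dsat} y_d$. For this I would use that $y_{S,d} > 0$ only when $d$ is live and $d \odot S$; if $d \in \dsat$ is live throughout the interval in which $y_{S,d}$ grows, then at that moment $S$ is a cluster with a live demand crossing it, so $S$ does not get pruned into $\B$ along this demand — more precisely, because $F$ connects $s$ to $t$, $F \cap \delta(S) \neq \emptyset$ for every such $S$, so the term $y_{S,d}$ is "charged" to some edge of $F$. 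Swapping the order of summation, each $d$-indexed moat mass $\sum_S y_{S,d}$ is counted, and by laminarity of the moats around $d$ no unit is over-counted relative to the edge budget.

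I expect the main obstacle to be the bookkeeping that prevents over-charging: a single edge $e \in F$ may lie in $\delta(S)$ for several nested clusters $S$ simultaneously, and several demands may be routed through overlapping moats, so I must be careful that the inequality $\sum_{e\in F}\sum_{S:e\in\delta(S)} y_S \geq \sum_{d\in\dsat} y_d$ actually holds rather than running the other way. The clean way to handle this, which I would adopt, is a "coloring" argument (as the lemma's name suggests): orient/route the path in $F$ satisfying each $d$, and for each demand $d$ and each moat $S$ with $y_{S,d}>0$, pick the edge where the $d$-path first exits $S$; distinctness within a fixed $d$ follows from nesting, and the per-edge load across all demands is controlled exactly by constraint~\eqref{eqn:s-lp:1}. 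Assembling these pieces yields $\len(F) = \sum_{e\in F} c_e \geq \sum_{e \in F}\sum_{S : e\in\delta(S)} y_S \geq \sum_{d\in\dsat} y_d$, which is the claim.
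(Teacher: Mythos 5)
Your argument is correct and follows the same route as the paper: sum the packing constraint~\eqref{eqn:s-lp:1} over $e\in F$, swap the order of summation, restrict to $d\in\dsat$, and observe that whenever $d\in\dsat$ crosses $S$ the path in $F$ connecting $d$ forces $F\cap\delta(S)\neq\emptyset$, so no relevant $y_{S,d}$ is dropped. The digressions into laminar chains and ``first exit'' edges are superfluous (the needed inequality $\sum_{e\in F}\sum_{S:e\in\delta(S)} y_S \geq \sum_{d\in\dsat} y_d$ falls out directly with no per-moat edge distinctness required), and note that for a fixed $d=\{s,t\}$ the sets with $y_{S,d}>0$ form \emph{two} nested chains (one around $s$, one around $t$), not a single chain.
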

This is quite intuitive.
Recall that the $y$ variables color the edges of the graph.
Consider a segment on edges corresponding to cluster $S$ with color $d$.
At least one edge of $F$ \emph{passes through} the cut $(S,\bar{S})$.
Thus, a portion of the cost of $F$ can be charged to $y_{S,d}$.
Hence, the total cost of the graph $F$ is at least as large as 
the total amount of colors paid for by $\dsat$.
We now provide a formal proof.
\begin{proof}
The length of the graph $F$ is
\begin{align*}
 \sum_{e\in F} c_e
  &\geq  \sum_{e\in F}\sum_{S: e\in\delta(S)} y_S &\text{by \eqref{eqn:s-lp:1}}\\
  &=     \sum_S |F \cap \delta(S)|y_S \\
  &\geq  \sum_{S: F \cap \delta(S)\neq\emptyset} y_S \\
  &=     \sum_{S: F\cap\delta(S)\neq\emptyset}\sum_{d: d\odot S} y_{S,d} \\
  &=     \sum_{d}\sum_{\substack{S: d\odot S\\F\cap\delta(S)\neq\emptyset}} y_{S,d} \\
  &\geq  \sum_{d\in \dsat}\sum_{\substack{S: d\odot S\\F\cap\delta(S)\neq\emptyset}} y_{S,d} \\
  &=     \sum_{d\in \dsat}\sum_{S: d\odot S} y_{S,d}, \\\intertext{because $y_{S,d}=0$ if $d\in \dsat$ and $F\cap\delta(S)=\emptyset$,}
  &=     \sum_{d\in \dsat} y_d  &\qedhere
\end{align*}
\end{proof}

\begin{proof}[\proofname\ of Theorem~\ref{thm:reduce}]
  We know that $\len(F^+) + \pi(\D \setminus \D^+) \leq 3\OPT$ because we
start with a $3$-approximate solution.
For any demand $d=(s,t)$, we know that $y_d$ is not more than the distance of $s, t$ in $G^\ast$.
Since distance between endpoints of $d$ is zero if it is satisfied in $\D^+$,
$y_d$ is non-zero only if $d\in\D\setminus\D^+$, we have $y(\D) = y(\D\setminus\D^+)\leq \pi^\ast(\D\setminus\D^+)$ by constraint~\eqref{eqn:s-lp:2}.
  Lemma~\ref{lem:submod-growth} gives $\len(F)$ in $G^\ast$, denoted by $\len_{G^\ast}(F)$,
is at most $2 y(\D) \leq 2\pi^\ast(\D\setminus\D^+) = 2\eps^{-1}\pi(\D\setminus\D^+) \leq 6\eps^{-1}\OPT$.
  Therefore, $\len(F) = \len(F^+) + \len_{G^\ast}(F) \leq (6\eps^{-1} + 3)\OPT$. 

To establish the second condition of the theorem, take an optimal forest $F'$:
$F'$ satisfies demands $\D^\OPT$, and we have $\len(F') + \pi(\D \setminus \D^\OPT) = \OPT$.
Define $A := \D^\OPT \setminus \dsat$ and $B := \dunsat \setminus A$.
The penalty of $F'$ under $\pi'$ is $\pi((\D\setminus\D^\OPT) \cup \dunsat) = \pi((\dsat\setminus\D^\OPT) \cup A \cup B)$.
Hence, the increase in penalty of $F'$ due to changing from $\pi$ to $\pi'$
is $\pi((\dsat\setminus\D^\OPT) \cup A \cup B) - \pi((\dsat\setminus\D^\OPT)  \cup B) \leq \pi(A\cup B) - \pi(B)$ due to the decreasing marginal cost property of submodular functions. 
We have $y(A \cup B) = \pi^\ast(A \cup B) = \eps^{-1}\pi(A\cup B)$ because $A\cup B = \dunsat$ is the set of dead demands of \algo{Submodular-PC-Clustering}; see the first condition of Lemma~\ref{lem:submod-growth}.
We also have $\eps^{-1}\pi(B) = \pi^\ast(B) \geq y(B)$ because of Constraint \eqref{eqn:s-lp:2}.
Therefore, the additional penalty is at most $\eps [ y(A \cup B) - y(B) ] = \eps y(A)$.
Since $F'$ satisfies the demands $A$,
we have $y(A) \leq \len(F') \leq \OPT$ from Lemma~\ref{lem:coloring}.
Therefore, the additional penalty is at most $\eps \OPT$.

The extension to \spctsp and \spcs is straight-forward once we observe
that the cost of building a tour or a stroll on a subset $S$ of vertices
is at least the cost of constructing a Steiner tree on the same set.
Hence, there algorithm pretends it has an \prob{SPCST} instance, and restricts
the demand set accordingly.  However, the extra penalty due to the
ignored demands $\dunsat$ is charged to the Steiner tree cost which
is no more than the TSP or stroll length.
\end{proof}

\subsection{Restricting the connectivity}\label{sec:reduce:2}

We first run \algo{Restrict-Demands} on $(G,\D,\pi)$.  
Let $F$ and $\dunsat$ be its output.
The forest $F$ satisfies all the demands in $\dsat := \D \setminus \dunsat$.
The length of this forest is $O(\OPT)$ and the demands
in $\dunsat$ can be safely ignored.

The forest $F$ consists of tree components $T_i$.
In the following, we connect some of these components to make the trees $\hat T_i$.
It is easy to see that this construction  guarantees the first two conditions of Theorem~\ref{thm:break}.
We work on a graph $G^\ast(V^\ast,E^\ast)$ formed from $G$ by contracting each tree component of $F$.
A potential $\phi_v$ is associated with each vertex $v$ of $G^\ast$,
which is $\eps^{-1}$ times the length of the tree component corresponding 
to $v$ in case $v$ is the contraction of a tree component, and zero otherwise.

We use the algorithm \algo{PC-Clustering} introduced in \cite{BHM10} to
cluster the components $T_i$ and construct a forest $F_2$
with components $\hat T_i$; the details of the algorithm can be seen
in \cite{BHM10}.  We obtain the folowing guarantees.

We first show the cost of the new edges is small.
\begin{lemma}[{\cite[Lemma 6]{BHM10}}]\label{lem:break:cost}
 The cost of $F_2$ is at most $2\sum_{v\in V^\ast} \phi_v$.
\end{lemma}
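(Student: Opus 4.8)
The plan is to mirror, almost line for line, the epoch-based charging argument already carried out for $F_2$ in the proof of Lemma~\ref{lem:submod-growth:cost}, with the single-vertex potentials $\phi_v$ now playing the role that the submodular penalty $\pi$ played there; the constraints of \algo{PC-Clustering} are the single-vertex analogues of \eqref{eqn:s-lp:1}--\eqref{eqn:s-lp:3}, namely $\sum_{S: e\in\delta(S)} y_S \le c_e$ and $\sum_{S\ni v} y_{S,v}\le \phi_v$, where $y_S := \sum_{v\in S} y_{S,v}$. First I would observe that every edge $e\in F_2\subseteq F_1$ was added to $F_1$ during the growth phase exactly at the moment its packing constraint became tight, so $c_e=\sum_{S:e\in\delta(S)}y_S$. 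Writing $0=t_0\le t_1\le\cdots$ for the event times of the growth phase and letting $y_S^{(j)}$ denote the increase of $y_S$ during epoch $j$ (equal to $t_j-t_{j-1}$ if $S$ is an active cluster throughout epoch $j$, and $0$ otherwise), this gives $\sum_{e\in F_2}c_e=\sum_j\sum_{e\in F_2}\sum_{S:e\in\delta(S)}y_S^{(j)}$, so it suffices to bound, for every epoch $j$, the cost apportioned to that epoch by $2\sum_{C}y_C^{(j)}$.

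Next, fix an epoch $j$, let $\C_j$ be the current cluster partition, and contract each cluster of $\C_j$ inside $F_2$ to obtain a graph $H$; denote by $\C_{act}$ and $\C_{dead}$ the active and inactive clusters of $\C_j$. The total cost charged to epoch $j$ equals $(t_j-t_{j-1})\sum_{C\in\C_{act}}\deg_H(C)$, while $\sum_C y_C^{(j)}=(t_j-t_{j-1})|\C_{act}|$, so the required bound reduces to $\sum_{C\in\C_{act}}\deg_H(C)\le 2|\C_{act}|$. I would establish this exactly as before from three observations: (i) $H$ is a forest, since $F_2\subseteq F_1$ and each cluster of $\C_j$ induces a connected subtree of $F_1$, so contracting these subtrees introduces no cycle; (ii) every leaf of $H$ is an active cluster, for an inactive pendant cluster would lie in $\B$ and its unique incident edge of $F_2$ would have been deleted by the pruning loop, contradicting $e\in F_2$; hence (iii) every inactive cluster has $\deg_H\ge 2$. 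Combining, $\sum_{C\in\C_{act}}\deg_H(C)=\sum_{C\in\C_j}\deg_H(C)-\sum_{C\in\C_{dead}}\deg_H(C)\le 2(|\C_{act}|+|\C_{dead}|)-2|\C_{dead}|=2|\C_{act}|$.

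Summing the per-epoch bounds over all epochs then yields $\sum_{e\in F_2}c_e\le 2\sum_S y_S=2\sum_S\sum_{v\in S}y_{S,v}=2\sum_{v\in V^\ast}\sum_{S\ni v}y_{S,v}\le 2\sum_{v\in V^\ast}\phi_v$, where the last inequality is the potential constraint $\sum_{S\ni v} y_{S,v}\le\phi_v$. The only step requiring genuine care — precisely as in the submodular case — is observation (ii): that a contracted inactive cluster of $\C_j$ is never a pendant of $F_2$. This uses the laminarity of the family $\eS$ of all clusters ever formed (an inactive cluster of $\C_j$ is contained in a maximal cluster of $\B$, and being a pendant of $F_2$ propagates so as to force the corresponding bridge out during pruning) together with the exhaustiveness of the pruning loop. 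Since this is the same reasoning already spelled out for $F_2$ in the proof of Lemma~\ref{lem:submod-growth:cost}, and in full detail in \cite[Lemma~6]{BHM10}, the remainder is routine bookkeeping.
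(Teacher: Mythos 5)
Your argument is correct and is exactly the epoch-based degree-counting analysis that the paper itself uses for the submodular analogue (Lemma~\ref{lem:submod-growth:cost}); the paper does not reprove Lemma~\ref{lem:break:cost} but cites it directly from \cite{BHM10}, where the same moat-growing/pruning charging argument you reproduce is carried out. In short, this is the intended proof.
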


Recall that the trees $T_i$ are contracted in $F_2$. 
 Construct $\hat F$ from $F_2$ by uncontracting all these trees. 
 Let $\hat F$ consist of tree components $\hat T_i$.
 It is not difficult to verify that $\hat F$ is indeed a forest, but
 we do not need this condition since we can always remove cycles to
 find a forest.
 Define $\hat \D_i := \{ (s,t)\in\D: s,t\in V(\hat T_i)\}$, and 
let $\D^\ast$ be the subset of demands satisfied by $\OPT$.
  Define $\D^\ast_i := \D^\ast \cap \D_i$, and denote by $\sfor(G, \D)$ the
length of a minimum Steiner forest of $G$ satisfying the demands $\D$.

\begin{lemma}[{\cite[Lemma 10]{BHM10}}]\label{lem:break:sep}
 $\sum_i\sfor(G, \D^\ast_i) \leq (1+\eps) \sfor(G, \D^\ast)$.
\end{lemma}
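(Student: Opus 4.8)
The plan is to follow the proof of \cite[Lemma~10]{BHM10}, since the clustering routine \algo{PC-Clustering} and the potentials $\phi$ used here coincide with those of that work. First I would fix a minimum Steiner forest $F'$ of $G$ satisfying $\D^\ast$, so $\len(F') = \sfor(G, \D^\ast)$, and let $F'^\ast$ be its image in $G^\ast$ after contracting every tree component $T_j$ of $F$; contraction cannot increase length, so $\len_{G^\ast}(F'^\ast) \le \len(F')$. The goal is to exhibit, for each $i$, a forest $W_i$ in $G$ satisfying $\D^\ast_i$ with $\sum_i \len(W_i) \le (1+\eps)\len(F')$; since $\sfor(G,\D^\ast_i)\le\len(W_i)$, this yields the claim.

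The forests $W_i$ are obtained by splitting $F'$ along the components $\hat T_i$ of $\hat F$ and repairing the broken connections with a controlled number of the trees $T_j$. Recall that the variables $y_{S,v}$ produced by \algo{PC-Clustering} form a laminar family whose only nonzero ``colors'' $v$ are the contracted vertices $v_j$ of the trees $T_j$, with $\phi_{v_j} = \eps^{-1}\len(T_j)$ and $\phi_v = 0$ otherwise. Say a subgraph $H$ of $G^\ast$ \emph{uses up} color $v$ if $E(H)\cap\delta(S)\neq\emptyset$ for every $S$ with $y_{S,v}>0$; the same charging argument as in Lemma~\ref{lem:coloring} (this is exactly the coloring lemma behind \cite[Lemma~10]{BHM10}) shows $\len_{G^\ast}(H) \ge \sum_{v\ \text{used up by}\ H}\phi_v$. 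Let $L$ be the set of indices $j$ such that $F'^\ast$ uses up $v_j$. Then $\sum_{j\in L}\len(T_j) = \eps\sum_{j\in L}\phi_{v_j} \le \eps\,\len_{G^\ast}(F'^\ast) \le \eps\,\len(F')$. Now set $F'' := F' \cup \bigcup_{j\in L} T_j$, so $\len(F'') \le (1+\eps)\len(F')$; and for each $i$ let $W_i$ consist of the edges of $F''$ lying inside the territory of $\hat T_i$ together with the trees $T_j\subseteq\hat T_i$ with $j\in L$. Since the $T_j$ are vertex-disjoint and each lies in a unique $\hat T_i$, the $W_i$ can be chosen edge-disjoint, so $\sum_i\len(W_i)\le\len(F'')\le(1+\eps)\len(F')$.

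The remaining point — and the one I expect to be the main obstacle — is to show that $W_i$ actually satisfies $\D^\ast_i$, i.e.\ that for every $(s,t)\in\D^\ast_i$ (whose endpoints lie in $V(\hat T_i)$) the vertices $s$ and $t$ are connected within $W_i$. The argument traces the $s$--$t$ path in $F'$; wherever it leaves the territory of $\hat T_i$ it must cross a cut $\delta(S)$ of a cluster $S$ that was pruned in the second phase of \algo{PC-Clustering}, and the laminarity of $\eS$ together with the pruning rule (the pruned cluster $S$ was tight, so $\sum_{S'\subseteq S} y_{S',v} = \phi_v$ for the relevant color $v=v_j$) forces $F'^\ast$ to use up $v_j$, whence $j\in L$ and the tree $T_j\subseteq F''$ reconnects the two sides inside $\hat T_i$ — the same connectivity/pruning interplay as in \cite{BHM10}. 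Carrying this out rigorously requires pinning down the notion of the ``territory'' of $\hat T_i$ (most naturally as the union of the moats of the colors $v_j$ with $T_j\subseteq\hat T_i$), handling Steiner vertices that could a priori be shared between territories, and verifying that the reroutings stay inside $\hat T_i$; that is the genuinely technical part, whereas bounding the total length of the repair trees $T_j$ by $\eps\,\OPT$ via the coloring lemma, as in the previous paragraph, is clean and self-contained.
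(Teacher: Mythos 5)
The paper does not prove this lemma itself; it is quoted directly from \cite[Lemma~10]{BHM10}. So the relevant comparison is against that proof. Your overall architecture does follow the BHM10 strategy — contracting the components of $F$, using the color/exhaustion (``coloring'') lemma to charge $\sum_{j\in L}\phi_{v_j}$ to $\len_{G^\ast}(F'^\ast)$, adding the exhausted trees $T_j$ at total extra cost $\le\eps\,\len(F')$, and then splitting the augmented forest. The charging half of the argument is clean and correct.

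The gap is precisely where you flag it, but it is larger than you suggest, and the mechanism you sketch does not quite work as stated. You assert that whenever the $s$--$t$ path in $F'$ ``leaves the territory of $\hat T_i$,'' the cut it crosses comes from a tight pruned cluster whose color $v_j$ is then exhausted, and that $T_j\subseteq\hat T_i$ reconnects the two sides. Neither implication is justified. First, exhausting a color $v_j$ is a \emph{global} property of $F'^\ast$ (it must cross $\delta(S)$ for \emph{every} $S$ with $y_{S,v_j}>0$), so crossing one pruned cut does not by itself put $j$ in $L$. Second, even if $j\in L$ and hence $T_j\subseteq F''$, adding the tree $T_j$ (which after uncontraction is a single vertex $v_j$ of $G^\ast$) does not reconnect the severed path in $F''$ unless the path actually passes through $v_j$; a path that merely detours near $\hat T_{i'}$ and comes back is not repaired by including $T_j$. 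What is actually needed is the structural statement from \cite{BHM10}: if a connected subgraph of $G^\ast$ joins two contracted vertices $u_1,u_2$ lying in \emph{different} components of $F_2$, then it exhausts the color of at least one of them; this follows from the laminarity of $\eS$, the pruning rule, and the connectivity lemma (if there exist $S\ni u_1,u_2$ with $y_{S,u_2}>0$ and $S'\ni u_1,u_2$ with $y_{S',u_1}>0$, then $u_1,u_2$ are already connected in $F_2$). One then defines the pieces by processing components of $F'^\ast$ one at a time, iteratively cutting off an exhausted vertex (whose tree gets added to compensate) until each remaining piece meets only one component of $F_2$; the edge-disjointness and the per-piece feasibility for $\D^\ast_i$ fall out of that iterative cutting, not out of a static ``territory'' decomposition, which as you note is never defined and does not obviously induce edge-disjoint pieces covering all of $F''$. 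In short: right framework, correct charging, but the reconnection argument — the heart of the lemma — is missing and the sketch of it has a logical hole.
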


Now, we are ready to prove the main theorem of this section.
\begin{proof}[\proofname\ of Theorem~\ref{thm:break}]
 The first condition of the lemma follows directly from our construction:  
we start with a solution, and never disconnect one of the tree components in the process.
 The construction immediately implies the second condition.
 By Lemma~\ref{lem:break:cost}, the cost of $F_2$ is at most $2\sum_{v\in V}\phi_v \leq \frac{2}{\eps}\len(F)$. Thus, $\hat F$ costs no more than $(2/\eps+1)\len(F)$, giving the third condition.
 Finally, Lemma~\ref{lem:break:sep} establishes the last condition.
\end{proof}

%
%
%

\section{PTASs for \pcst, \pctsp and \pcs on planar graphs}\label{sec:planar}
Since \pcst is a special case of \pcsf,
Theorems~\ref{thm:reduce-submod} and \ref{thm:btw} imply that
\pcst admits a PTAS on planar graphs.
However, obtaining the same result for \pctsp and \pcs is not immediate
from those theorems since the latter problems are not special
cases of \pcsf.
Here we explain how we can use these theorems to obtain the desired PTASs.
Here we focus on $\pctsp$,
however, the same arguments with minor changes apply to $\pcs$ as well.

Take an instance $\I = (G, \D, \pi)$ of \pctsp,
and apply Theorem~\ref{thm:reduce} on $\I$ to
obtain $F$ and $\dunsat$.  Since all the demands
share a common root vertex\footnote{If we have a penalty for each vertex in
the \pctsp formulation, we can guess a root vertex $r$ and define the demand pairs accordingly.}, all the terminals
in $\dsat$ are connected in $F$.  We then invoke
the TSP spanner construction of Arora et al.~\cite{AGK98}
to build $H$.  Finally, we use the contraction decomposition
theorem of Demaine et al.~\cite{DHM07} to contract
a small-weight subset of edges and reduce the problem
to graphs of bounded treewidth.  The total additional charge
due to penalties of $\dunsat$ and contracted edges is
at most $O(\eps)\OPT$.  Therefore, we can obtain a PTAS
by solving the bounded-treewidth instance precisely.

\section{Hardness of \pcsf on series-parallel graphs}\label{sec:hard}
We first present the hardness proof for \pcsf on
a planar graph of treewidth two. The proof shows hardness for a very
restricted class of graphs: short cycles going through a single
central vertex.

\begin{proof}[\proofname\ of Theorem~\ref{thm:sf-hard}(1)]
  We reduce an instance $\I$ of \prob{Vertex Cover} on $3$-regular graphs
to an instance $\I'$ of \pcsf on a planar graphs of treewidth two.
  The former is known to be APX-hard \cite{MR2001c:90110}.
  The instance $\I$ is defined by an undirected graph $G$.
If $n$ denotes the number of vertices of $G$,
the number edges is $m = 3n/2$.
We will denote the $i$-th vertex of $G$ by $v_i$, the $j$-th edge by $e_j$, 
and the first and second endpoints of $e_j$
by $e^{(1)}_j$ and $e^{(2)}_j$, respectively.

 We now specify the reduction (illustrated in Figure~\ref{fig:hardness}); $\I'$ is represented by $(H, \D, \pi)$.
The graph $H$ consists of the vertices
\begin{itemize}
\item $a_i$ for $1\leq i\leq n$, 
\item $b_j, c^1_j , c^2_j$ for $1 \leq j \leq m$, 
\item central vertex $w$,
\end{itemize}
and the edges 
\begin{itemize}
\item $\{w,a_i\}$ of cost $2$ ($1\leq i\leq n$), 
\item $\{w,c^1_j\}, \{w,c^2_j\}, \{c^1_j,b_j\}, \{c^2_j,b_j\}$ of cost $1$ ($1 \leq j \leq m$).
\end{itemize}
The instance contains the following demands: 
\begin{itemize}
\item $\{w,b_j\}$ with penalty $3$ ($1\leq j\leq m$),
\item If $v_i =e^{(\ell)}_j$ for some $1\leq i\leq n$, $1\leq j\leq m$, and $\ell\in\{1,2\}$, then $\{a_i,c^\ell_j\}$ is a demand with penalty $1$.
\end{itemize}
Thus the number of demands is exactly $m + 3n$ and each $a_i$ appears 
in exactly $3$ demands. We claim that the cost of the optimum solution of 
$\I'$ is exactly $2m + 2n + \tau(G)$, where $\tau(G)$ is the size of the 
minimum vertex cover in $G$. Note that $\tau(G) \geq n/3$ (as $G$ is $3$-regular), 
thus $2m + 2n + \tau(G)$ is at most a constant times $\tau(G)$. In order to prove 
the correctness of the reduction, we prove the following two statements:
\begin{description}
\item[(1)] Given a vertex cover of size $k$ for $G$, 
a solution of cost $2m + 2n + k$ can be constructed.
\item[(2)]  Given a solution of cost at most $2m+2n+k$, a vertex cover of size at most $k$  
can be constructed.
\end{description}
To prove (1), suppose that $C$ is a vertex cover of size $k$ for $G$.
Let $T$ be a tree of $H$ that contains
\begin{itemize}
\item edge $\{w,a_i\}$ if and only if $v_i \not\in C$,
\item edges $\{w,c^1_j\}, \{c^1_j,b_j\}$ if and only if $e^1_j \not\in C$, 
\item edges $\{w,c^2_j\}, \{c^2_j,b_j\}$ if and only if $e^1_j \in C$.
\end{itemize}
\begin{figure}
\centerline{\includegraphics[height=8cm]{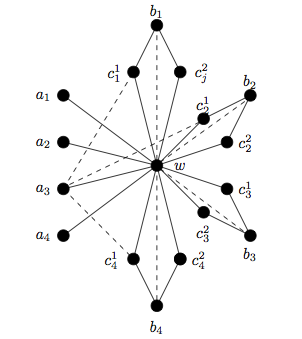}}
\caption{Illustrating the reduction from \prob{$3$-Regular Vertex Cover} to \pcsf.\label{fig:hardness}}
\end{figure}

The total cost of $T$ is $2(n - k) + 2m$. Observe that all the demands $\{w, b_j\}$ 
are connected (either via $c^1_j$ or $c^2_j$).
Furthermore,if $v_i\not\in C$, then all three demands where $a_i$ appears are satisfied: 
edge $\{w,a_i\}$ is in $T$ and if $v_i = e^1_j$, then edge $\{w,c^1_j\}$ is in $T$ as well. 
(Note that if $v_i = e^2_j$ and $v_i \not\in C$, then $e^1_j \in C$ must hold, 
and therefore $\{w, c^2_j\}$ is in $T$.) Thus the total penalty is at most $3k$, and 
hence the cost of the solution is at most $2n + 2m + k$, as claimed.

To prove (2), suppose that subgraph $F$ of $G$ is a solution such that 
the sum of the cost of $F$ and the penalties is at most $2m+2n+k$. 
We can assume that for every $1 \leq i \leq n$, vertex $b_j$ can be 
reached from $w$: otherwise we can decrease the penalty by $3$ 
at the cost of adding two edges of cost $1$. Furthermore, we can 
assume that only one of $c^1_j$ and $c^2_j$ is can be reached from $w$: 
otherwise we can remove an edge without disconnecting $b_j$ from $w$, 
thus the cost decreases by $1$ and the penalty increases by at most $1$. 
Finally, we can assume that if $\{w,a_i\} \in F$, then all $3$ demands 
containing $a_i$ are connected: otherwise removing $\{w,a_i\}$ decreases the cost
 by $2$ and increases the penalty by at most $2$.

Let vertex $v_i$ be in $C$ if and only if $\{w,a_i\} \not\in F$. 
We claim that $C$ is a vertex cover of size at most $k$. To see 
that $C$ is a vertex cover, consider an edge $e_j$. We have observed 
above that one of $c^1_j$ and $c^2_j$ cannot be reached from $w$. 
If $c^1_j$ cannot be reached from $w$ and $e^{(1)}_j = v_i$,
then the demand $\{v_i, c^1_j\}$ is not connected by $F$. 
Therefore, not all $3$ demands containing $a_i$ are connected, 
which means (as observed above) that $\{w,a_i\} \not\in F$. 
Thus $v_i \in C$, covering the edge $e_j$.

Since every $b_j$ can be reached from $w$ and $\{w,a_i\}\in F$ if $v_i\not\in C$,
the cost of $F$ is at least $2m + 2(n - |C|)$. Furthermore, if $v_i \in C$, then 
$\{w,a_i\} \not\in F$, which means that we have to pay the penalty for the $3$ demands
 containing $a_i$. Therefore, the total cost of the solution is at least $2m+2n+|C|$. 
We assumed that the cost of the solution is at most $2m+2n+|C|$, thus $|C| \leq k$
follows, what we had to prove.
\end{proof}
The proof for the Euclidean version is very similar to the graph
version. The main difference is that the central vertex $w$ is
replaced by a set of points arranged along a long vertical path.

\begin{proof}[\proofname\ of Theorem~\ref{thm:sf-hard}(2)]
  We reduce an instance $\I$ of \prob{Vertex Cover} on $3$-regular
  graphs to an instance $\I'$ of \pcsf on points in the Euclidean plane.  If $n$
  denotes the number of vertices of the 3-regular graph $G$ in $\I$,
  then the number edges is $m = 3n/2$.  We will denote the $i$-th vertex
  of $G$ by $v_i$, the $j$-th edge by $e_j$, and the first and second
  endpoints of $e_j$ by $e^{(1)}_j$ and $e^{(2)}_j$, respectively.

 We now specify the reduction (illustrated in
 Figure~\ref{fig:hardness2}). Let us define $U:=10000(n+m)$ (``basic unit
 of cost''), $H=10U$ (``horizontal length''), and $V=100U$ (``vertical spacing''). Instance $\I'$ contains the following
 set $P$ of points:
\begin{itemize}
\item $z_{0,y}=(0,y)$ for every $-mV\le y \le nV$,
\item $z_{x,y}=(x,y)$ and for every $0 \le x \le H$ and $y=iV$ for $1 \le
  i \le n$,
\item $z_{x,y}=(x,y)$ and $z_{x,y+4U}$ for every $0 \le x \le H$ and $y=-jV$ for $1 \le
  j \le m$,
\item $a_i=(H+2U,iV)$ for $1\leq i\leq n$, 
\item $b_j=(H,-jV+2U)$ for $1 \leq j \leq m$, 
\item $c^1_j=(H,-jV+U)$, and
  $c^2_j=(H,-jV+3U)$ for $1 \leq j \leq m$.
\end{itemize}
Let $Z$ be the set of all $z_{x,y}$ vertices in $P$, note that $|Z|=V(i+j)+1+(i+2j)H$. For ease of notation, we define $w_i=z_{H,iV}$, $w^1_j=z_{H,-jV}$,
$w^2_j=z_{H,-jV+4U}$.

The instance contains the following demands: 
\begin{enumerate}
\item If $z_{x,y}$ and $z_{x+1,y}$ are both in $P$, then there is a
  demand $\{z_{x,y},z_{x+1,y}\}$ with penalty 1.
\item If $z_{x,y}$ and $z_{x,y+1}$ are both in $P$, then there is a
  demand $\{z_{x,y},z_{x,y+1}\}$ with penalty 1.
\item $\{(0,0),b_j\}$ with penalty $3U$ ($1\leq j\leq n$),
\item If $v_i =e^{(\ell)}_j$ for some $1\leq i\leq n$, $1\leq j\leq m$,
  and $\ell\in\{1,2\}$, then $\{a_i,c^\ell_j\}$ is a demand with penalty
  $U-10$.
\end{enumerate}
The total
number of demands is $|Z|-1+n+3m$ and each $a_i$ appears in exactly $3$
demands. We claim that the cost of the optimum solution of $\I'$ is
between $|Z|+(2m+2n+\tau(G))U$ and $|Z|+(2m+2n+\tau(G))U-100n$, where
$\tau(G)$ is the size of the minimum vertex cover in $G$. Note that
$m=3n/2$ and $\tau(G)\ge m/3$, thus $|Z|+(2m+2n+\tau(G))U$ is at most
a constant factor larger than $\tau(G)U$.

More precisely, in order to prove the
correctness of the reduction, we prove the following two statements:
\begin{description}
\item[(1)] Given a vertex cover of size $k$ for $G$, 
a solution of cost at most $|Z|+(2m+2n+k)U$ for $\I'$ can be constructed.
\item[(2)]  Given a solution of cost at most $|Z|+(2m+2n+k)U$ for $\I'$, a vertex cover of size at most $k$  
can be constructed.
\end{description}
To prove (1), suppose that $C$ is a vertex cover of size $k$ for $G$.
Let $F$ be the forest (actually, a tree) that contains
\begin{enumerate}
\item edge $\{z_{x,y},z_{x+1,y}\}$ if both these points are in $P$,
\item edge $\{z_{x,y},z_{x,y+1}\}$ if both these points are in $P$,
\item edge $\{w_i,a_i\}$ if $v_i \not\in C$,
\item edges $\{w^1_j,c^1_j\}$ and $\{c^1_j,b_j\}$  if $e^{(1)}_j \not\in C$, 
\item edges $\{w^2_j,c^2_j\}$ and $\{c^2_j,b_j\}$ if $e^{(1)}_j \in C$.
\end{enumerate}

\begin{figure}
\vskip-2cm
\centerline{\includegraphics[height=24cm]{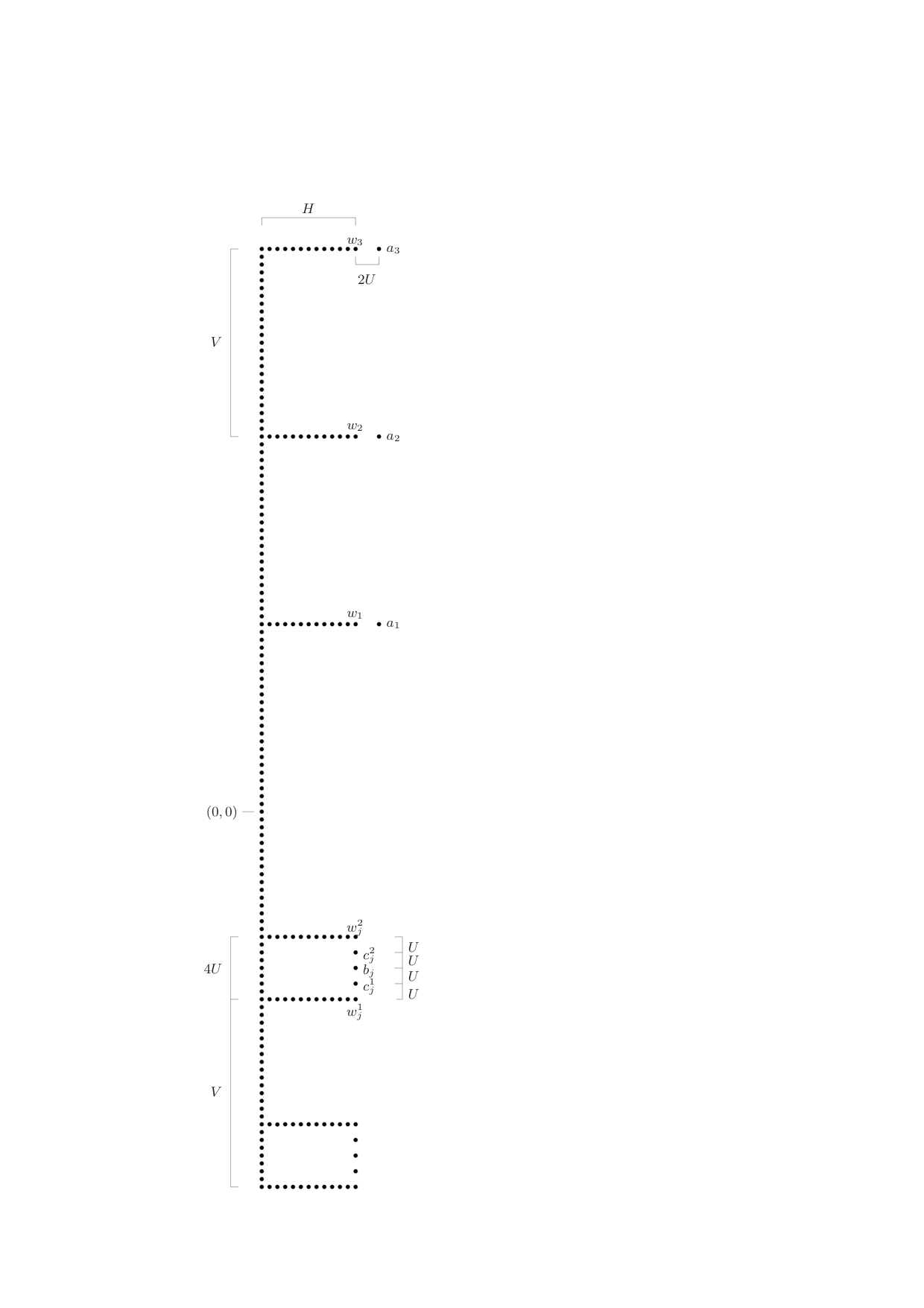}}
\vskip-2cm
 \caption{Illustrating the reduction from \prob{$3$-Regular Vertex
     Cover} to Euclidean \pcsf.\label{fig:hardness2}}
 \end{figure}

The total cost of $F$ is $|Z|-1+2U(n-k) + 2Um$. Observe that all the
demands $\{(0,0), b_j\}$ are satisfied.  Furthermore, if $v_i\not\in
C$, then all three demands where $a_i$ appears are satisfied. This can
be seen as follows. First, $a_i$ is in the same component as $w_i$ and
hence as every vertex of $Z$. If $v_i = e^{(1)}_j$, then there is a demand
$\{a_i,c^1_j\}$ and $c^1_j$ is connected with $w^1_j$ (and hence with
$a_i$). If $v_i = e^{(2)}_j$, then $v_i \not\in C$ means that $e^{(1)}_j \in
C$ must hold, and therefore $c^2_j$ is connected to $w^2_j$,
satisfying the demand $\{a_i,c^2_j\}$. Thus the total penalty is at most
$3k(U-10)$, and hence the cost of the solution is at most
$|Z|-1+(2m+2n+k)U-30k$, as claimed.

To prove (2), suppose that forest $F$ is an optimum solution such that
the sum of the cost of $F$ and the penalties is at most
$|Z|+(2n+2m+k)U$.  First, we can assume that every demand of the first
two types is satisfied: if, say, $(z_{x,y},z_{x+1,y})$ is not
satisfied, then we can extend $F$ by adding an edge of cost 1, which
decreases the penalty by at least 1. Thus all the $z_{x,y}$ points are
in the same connected component $K$ of $F$. We can also assume that
every demand of the third type is satisfied: if $\{(0,0),b_j\}$ is
not satisfied, then we can decrease the penalty by $3U$ at the cost of
$2U$ by adding edges $\{w^1_j,c^1_j\}$ and $\{c^1_j,b_j\}$,
contradicting the optimality of $F$. Therefore, every vertex $b_j$ is
in the component $K$.

Let
$Z'=\{z_{x,y}\in Z\mid x=0 \vee x\ge 10\}$. Let $R$ be the region of
the plane at Manhatten distance at most 3 from $Z'$. Note  that $R$
consists of one ``vertical'' and $n+2m$ ``horizontal'' components.


We claim that the cost of $F$ inside $R$ is at least $|Z'|$. We
have seen above that a single component $K$ of $F$ contains every
point of $P\cap R$. The restriction of $K$ to $R$ gives rise to
several components. Consider such a component $K'$ containing a subset
$S\subseteq Z'$ of vertices. We show that the cost of $K'$ is at least
$|S|$. The vertices of $S$ lie on a horizontal or vertical line.  This
means that there are two vertices $s_1,s_2\in S$ at distance $d\ge
|S|-1$. As $K$ is not contained fully in any component of $R$,
component $K'$ has to contain a point $s_3$ on the boundary of $R$. As
$s_3$ is at distance at least 3 from $s_1$ and $s_2$, it can be
verified that any Steiner tree of $s_1$, $s_2$, $s_3$ has cost at
least $d+1=|S|$. Summing for every component $K'$ of the restriction of $K$
to $R$, we get that the cost of $K$ in $R$ is at least $|P\cap R|$.

Let $R^+$ be the region of space at Manhattan distance at most 3 from
$Z$. We claim that the cost of every component of $F\setminus R^+$ is
at most $3U$. There are two types of components of $F\setminus R^+$:
(1) those that contain a point of $P$ and (2) those that do not
contain such a point. Clearly, there are at most $n+3m$ components of
the first type. Suppose that there is a component $D$ of the second
type having cost more than $3U$. In this case, we modify $F$ to
obtain a better solution as follows. Consider $F\setminus R^+$ (i.e.,
let us remove the part of $F$ inside $R^+$) and let
us remove every component of the second type. After that, let us add all the $|Z|-1$ edges
of the form $\{w_{x,y},w_{x+1,y}\}$, $\{w_{x,y},w_{x,y+1}\}$. Finally, for
  every component of the first type, if it intersects $R^+$, then let
  us choose a point of the component on the boundary of $R^+$ and
  connect this point to the nearest vertex of $Z$. It is clear that
  the new forest $F'$ satisfies every demand satisfied by $F$: every
  point of $P$ connected to $Z$ remains connected to $Z$. By our claim in the previous
  paragraph, the cost of $F\setminus R'$ is less than the cost of $F$
  by at least $|Z'|=|Z|-9(n+2m)$. Removing components of the second type decreases the
  cost by more than $3U$ (as there are at least one such component having
  cost more than $3U$). The edges connecting $Z$ increase the cost
  by $|Z|-1$. Adding the new connections corresponding to the
  components of the first type increases the cost by at most $n+3m$.
  As $3U\ge 9(n+2m)-1+n+3m$, forest $F'$ is a strictly better solution, a contradiction.

  Suppose now that there is a component $D$ of the first type with
  cost more than $3U$. For $-m \le s \le n$, let $R_s$ be the region
  of the plane at Manhattan distance at most $4U$ from $(H,sV)$.
  Observe that for each $s$, all the points of $P\cap R_s$ can be connected to the
  nearest point of $Z$ with a total cost of at most $3U$. This means
  that if $D$ intersects only one of these regions, say $R_s$, then we
  can substitute $D$ at cost at most $3U$ in such a way that every
  demand satisfied by $F$ remains satisfied, contradicting the
  optimality of $F$. Suppose therefore that $D$ intersects $t\ge 2$ of
  these regions; in this case, the cost of $D$ is at least
  $(t-1)(V-8U)>6tU-6U\ge 3tU$. Let us replace $D$ by connecting every
  point of $P\cap D$ to the closest vertex of $Z$. The new
  connections increase the cost by at most $t\cdot 3U$, which is less than
  the cost of $D$, a contradiction.

  We have proved that for every component $D$ of $F\setminus R^+$,
  $D\cap P$ is either a single $a_i$, or a subset of
  $\{b_j,c^1_j,c^2_j\}$. Therefore, every such component $D$ intersects
  $R^+$: otherwise, $D$could be safely removed, as it does not satisfy
  any demand. Next we show that it can be asssumed that
  only one of $c^1_j$ and $c^2_j$ is in $K$.  Otherwise we can remove
  every component of $F\setminus R^+$ intersecting
  $\{b_j,c^1_j,c^2_j\}$ and replace them with the edges
  $\{w^1_j,c^1_j\}$ and $\{c^1_j,b_j\}$. The total cost of the
  components we removed is at least $2U-3+U-3$ (which is the minimum
  cost of connecting $b_j$, $c^1_j$, $c^2_j$ to $R^+$) and the new
  edges have cost $2U$. This transformation might disconnect the
  demand containing $c^2_j$, hence the penalty can increase by at most
  $U-10$ only, contradicting the optimality of $F$. 

We can assume that if $a_i$ is in $K$, then all $3$ demands 
containing $a_i$ are connected: otherwise removing the component of
$F\setminus R^+$ containing $a_i$ decreases the cost
 by at least $2U-3$ and increases the penalty by at most $2(U-10)$.

 Let vertex $v_i$ be in $C$ if and only if $a_i$ is not in  component $K$.
 We claim that $C$ is a vertex cover of size at most $k$. To
 see that $C$ is a vertex cover, consider an edge $e_j$. We have
 observed above that one of $c^1_j$ and $c^2_j$ is not in $K$. If
 $c^1_j\not\in K$ and $e^{(1)}_j = v_i$,
 then the demand $\{a_i, c^1_j\}$ is not connected by $F$.  Therefore,
 not all $3$ demands containing $a_i$ are connected, which means (as
 observed above) that $a_i$ is not in $K$.  Thus $v_i \in C$,
 covering the edge $e_j$. Similarly, $c^2_j\not\in K$, then
 $e^{(2)}_j\in C$.

The cost of $F\cap R^+$ is at least $|Z|-9(n+2m)$. 
Since every $b_j$ is in $K$ and $a_i$ is in $K$ if $v_i\not\in C$,
the cost of $F\setminus R^+$ is at least $(2U-3)m + (2U-3)(n -|C|)$. Furthermore, if $v_i \in C$, then 
we have to pay the penalty for the $3$ demands
 containing $a_i$. Therefore, the total cost of the solution is at
 least 
\begin{multline*}
|Z|-9(n+2m)+(2U-3)m + (2U-3)(n -|C|)+3|C|(U-10)
\ge |Z|+(2m+2n+|C|)U-100n.
\end{multline*}
We assumed that the cost of the solution is at most $|Z|+(2m+2n+k)U$.
As $U> 100n$, this is only possible if $|C| \leq k$, what we had to
prove.
\end{proof}


\bibliographystyle{siam}
\bibliography{main} 

\appendix

\section{Basic graph theory definitions}\label{app:btw-defs}
Let $G(V,E)$ be a graph. As is customary, let $\delta(V')$ denote
the set of edges having one endpoint in a subset $V'\subseteq V$ of
vertices. For a subset of vertices $V'\subseteq V$, the subgraph of
$G$ induced by $V'$ is denoted by $G[V']$. With slight abuse of
notation, we sometimes use the edge set to refer to the graph
itself. Hence, the above-mentioned subgraph may also be referred to
by $E[V']$ for simplicity. We denote the length of a shortest
$x$-to-$y$ path in $G$ as $\dist_G(x, y)$. For an edge set $E$, we
denote by $\ell(E):=\sum_{e\in E}c_e$ the total length of edges in
$E$.

Given an edge $e=(u,v)$ in a graph $G$, the \emph{contraction} of
$e$ in $G$ denoted by $G/e$ is the result of unifying vertices $u$
and $v$ in $G$, and removing all loops and multiple edges except the
shortest edge.  More formally, the contracted graph $G/e$ is
formed by the replacement of $u$ and $v$ with a
 single vertex such that edges incident to the new vertex are the edges other than $e$ that
  were incident with $u$ or $v$. To obtain a simple graph, we first remove all self-loops
  in the resulting graph.
In case of multiple edges, we only keep the shortest edge and remove
all the rest. The contraction $G/E'$ is defined as the result of
iteratively contracting all the edges of $E'$ in $G$, i.e., $G/E' :=
G/e_1/e_2/\dots/e_k$ if $E'=\{e_1, e_2, \dots, e_k\}$. Clearly, the
planarity of $G$ is preserved after the contraction. Similarly,
contracting edges does not increase the cost of an optimal Steiner
forest.

  The boundary of a face of a planar embedded graph is the set of edges adjacent to the face; it does not always form a simple cycle. The boundary
$\partial H$ of a planar embedded graph $H$ is the set of edges bounding the infinite face. An edge is strictly enclosed by the boundary of $H$ if the edge belongs to $H$ but not to $\partial H$.

Now we define the basic notion of treewidth, as introduced
by Robertson and Seymour~\cite{RS86}.  To define this notion, we
consider representing a graph by a tree structure, called a tree
decomposition. More precisely, a \emph{tree decomposition} of a
graph $G(V,E)$ is a pair $(T,\B)$ in which $T(I,F)$ is a tree and
$\B=\{B_i\:|\:i\in I\}$ is a family of subsets of $V(G)$ such that
1) $\bigcup_{i\in I}B_i = V$; 2) for each edge $e=(u,v)\in E$, there
exists an $i\in I$ such that both $u$ and $v$ belong to $B_i$; and
3) for every $v\in V$, the set of nodes $\{i\in I\:|\:v\in B_i\}$
forms a connected subtree of $T$.

To distinguish between vertices of the original graph $G$ and
vertices of $T$ in the tree decomposition, we call vertices of $T$
\emph{nodes} and their corresponding $B_i$'s bags.  The \emph{width}
of the tree decomposition is the maximum size of a bag in $\B$ minus
$1$.  The \emph{treewidth} of a graph $G$, denoted $\tw(G)$, is the
minimum width over all possible tree decompositions of $G$.

For algorithmic purposes, it is convenient to define a restricted
form of tree decomposition. We say that a tree decomposition
$(T,\B)$  is {\em nice} if the tree $T$ is a rooted tree such that
for every $i\in I$ either
\begin{enumerate}
\item $i$ has no children ($i$ is a {\em leaf node}),
\item $i$ has exactly two children $i_1$, $i_2$ and
  $B_i=B_{i_1}=B_{i_2}$ holds ($i$ is a {\em join node}),
\item $i$ has a single child $i'$ and $B_i=B_{i'}\cup\{v\}$ for
  some $v\in V$ ($i$ is an {\em introduce node}), or
\item $i$ has a single child $i'$ and $B_i=B_{i'}\setminus \{v\}$ for
  some $v\in V$ ($i$ is a {\em forget node}).
\end{enumerate}
It is well-known that every tree decomposition can be transformed
into a nice tree decomposition of the same width in polynomial time.
Furthermore, we can assume that the root bag contains only a single vertex.

We also need a basic notion of embedding; see, e.g., \cite{RS94,
CM05}. In this paper, an \emph{embedding} refers to a \emph{$2$-cell
embedding}, i.e., a drawing of the vertices and edges of the graph
as points and arcs in a surface such that every face (connected
component obtained after removing edges and vertices of the embedded
graph) is homeomorphic to an open disk. We use basic terminology and
notions about embeddings as introduced in \cite{MT01}.  We only
consider compact surfaces without boundary.  Occasionally, we refer
to embeddings in the plane, when we actually mean embeddings in the
$2$-sphere.  If $S$ is a surface, then for a graph $G$ that is
($2$-cell) embedded in $S$ with $f$ facial walks, the number
$g=2-|V(G)|+|E(G)|-f$ is independent of $G$ and is called the
\emph{Euler genus} of $S$. The Euler genus coincides with the
crosscap number if $S$ is non-orientable, and equals twice the usual
genus if the surface $S$ is orientable.

\section{\pcst, \pctsp and \pcs on bounded-treewidth graphs}\label{app:btw}

Treewidth is a notion of how similar a graph is to trees.  Since tree structure
usually lends itself to the dynamic programming approach, it is plausible
that many optimization problems may be solvable in polynomial time
on graphs of bounded treewidth; Bodlaender and Koster~\cite{BK08} have a
comprehensive survey on this topic.
In particular, several Steiner network problems become relatively easy
when restricted to bounded-treewidth graphs.  Among them are
\prob{Steiner Tree}, \prob{TSP} and \prob{Stroll}.  One surprising outlier
is \prob{Steiner forest} that is proved to be NP-hard, yet it admits a PTAS~\cite{BHM10}.
In this section, we study the prize-collecting extensions of the above problems,
and when possible, we provide a polynomial-time algorithm for them.
More specifically, 
we present PTASs for \pcst, \pctsp and \pcs on bounded-treewidth graphs.  
We already showed in Section~\ref{sec:hard}
that \pcsf is APX-hard
even on series-parallel graphs.  The proof is extended to give APX-hardness for Euclidean plane.

\label{sec:btw:pos}
We focus the discussion on \pcst, however, minor modifications
allow us to solve \pctsp and \pcs, too.
We are given a weighted graph $G(V,E)$ of treewidth $k-1$
for a fixed parameter $k$, and a penalty function $\pi: V\rightarrow
\R_+$.
We have a nice tree
decomposition $(T,\mathcal{B})$ for $G$. 
Each bag $B_i$ has size at most $k$.  These are sometimes
called \emph{portals} for the subtree below node $B_i$.
Let $I$ denote the nodes of the tree decomposition $T$,
and for each $i\in I$, let $T_i$ be the subtree of $T$
below $i$.
A dynamic programming entry is specified by a tuple
$(i, S, \mathcal{P})$ where 
\begin{itemize}
\item $i\in I$ is a node in the tree decomposition, 
\item $S \subseteq B_i$ is a subset of portals of
the subtree $T_i$, and
\item $\mathcal{P}$ is a partition of $S$.
\end{itemize}
Let us denote by $V_i$ the vertices corresponding to the
subtree $T_i$, i.e., $V_i := \cup_{i'\in T_i} B_{i'}$.
A dynamic programming entry $\DP(i, S, \mathcal{P})$
takes up the least cost of building a subgraph $H$ such that
\begin{itemize}
\item $H$ uses only the edges whose both endpoints are in $V_i$,
\item $H$ connects the vertices in each set $P_j$ of the partition
$\mathcal{P} = \{P_1, P_2, \ldots, P_m\}$,
\item $S$ is the subset of $B_i$ whose penalty is not paid, moreover, if a vertex $v\in V_i$ is not connected to $S$ via $H$, then
its penalty $\pi(v)$ is paid in the total cost.
\end{itemize}

The final solution to the problem can be found as
$\min_S{\DP(r, S, \{S\})}$ where $r$ is the root of the tree
decomposition, i.e., it does not matter which subset of the bag
of the root is picked as long as they form a single component.

The DP entries are easy to compute for leaves:
let $B_i = \{v\}$ for a leaf $i$.
There are two possibilities:
$\DP(i, \emptyset, \emptyset) = \pi(v)$ and $\DP(i, \{v\}, \{\{v\}\})
= 0$.
The update procedure works as follows for different tree nodes:
\begin{description}
\item [Introduce node] $i$ is the parent of $i'$, and we have $B_i =
B_{i'} \cup \{v\}$.  Then, $\DP(i, S, \mathcal{P}) = \pi(v) + \DP(i',
S, \mathcal{P})$ if $v\not\in S$.  Next consider an entry $\DP(i, S,
\mathcal{P})$ such that
for $v\in S$ and $\mathcal{P} = \{P_1, P_2, \ldots, P_m\}$ where
$v\in P_1$.  Let $\mathcal{P}' := \{P_1\setminus\{v\}, P_2, \ldots,
P_m\}$ and let $d$ be the distance of $v$ to the set $P_1 \setminus
\{v\}$.  The dynamic programming sets $\DP(i, S, \mathcal{P}) =
d + \DP(i', S\setminus\{v\}, \mathcal{P}')$.
\item [Forget node]
$i$ is the parent of $i'$, and we have $B_{i'} =
B_{i} \cup \{v\}$.  Then, 
\begin{align*}
\DP(i, S, \mathcal{P}) = \min\Big[&\pi(v) + \DP(i',
S, \mathcal{P}), \\&\min_{\mathcal{P}'}\left\{\DP(i', S\cup\{v\},
\mathcal{P}') : \mathcal{P}'\mbox{ is formed by adding $v$ to a set of
}\mathcal{P}\right\}\Big].
\end{align*}
The first terms considers the case where we pay the penalty for $v$
and do not connect it in the final Steiner tree, whereas the
second term takes into account the case where $v$ is connected
to each connected component of the partition.
\item [Join node] the node $i$ has two children $i_1$ and $i_2$
with the same bags.  We set $\DP(i, S, \mathcal{P})$ to
$$\min_{\mathcal{P}_1, \mathcal{P}_2} \left\{\DP(i_1, S, \mathcal{P}) + \DP(i_2, S, \mathcal{P}) -
\pi(B_i\setminus S)\right\},$$
where the minimization goes over all pairs $\mathcal{P}_1$ and
$\mathcal{P}_2$ whose connectivity implies that of $\mathcal{P}$.
The last term in the minimum operand is for canceling the
double charging of the unsatisfied terminals of $B_i$.
\end{description}

It is not difficult to verify that the algorithm produces the correct
output, and we defer the proof to the full version of the paper.
The running time of the algorithm is polynomial in the number of
DP entries, and the latter is at most $n\cdot 2^k\cdot k^k$.
Since $k$ is a constant, the running time is a polynomial.

To extend the algorithm to \pctsp, the DP state is modified
to $(i, \mathcal{P})$ where $i\in I$ is a node of the tree
decomposition, and $\mathcal{P}$ is a set of pairs of vertices
in bag $B_i$.  A pair $s,t$ implies that there is a path between
$s$ and $t$ in the subsolution, but the two nodes should be extended
from outside the subtree $T_i$ to make a tour.  The final solution
is stored in $\DP(r, \{(r,r)\})$.  The algorithm for \pcs works in the
same way except that the final solution can be founded in
$\min_{s,t\in B_r} \DP(r, \{(s,t)\})$ since we do not need to have
a closed tour.

\end{document}